%%%% Remove "runningheads" for the version without page numbers
\documentclass[runningheads]{llncs}
%%%% Comment out the following line for the version without comments
\def \VersionWithComments{}

\usepackage{amsfonts}
\usepackage{xcolor}
\usepackage{graphicx}
\usepackage{amssymb}
\usepackage{amsmath}
\usepackage{stmaryrd}

\usepackage{algorithm}
\usepackage[noend]{algpseudocode}

\usepackage{hyperref}
\usepackage{cleveref}
\usepackage{xspace}
\usepackage{marginnote}
\usepackage{multirow}
\usepackage{rotating}
\usepackage{colortbl}
\usepackage{thm-restate}
\usepackage{enumitem}
\usepackage{orcidlink} 

%%%%%%%%%%%%%%%%%%%%%%%%%%%%%%%%%%%%%%%%%%%%%%%%%%%%%%%%%%%%
% IMITATOR settings for tex pictures
%%%%%%%%%%%%%%%%%%%%%%%%%%%%%%%%%%%%%%%%%%%%%%%%%%%%%%%%%%%%
% Tikz
\usepackage{pgfplots}
\usepackage{tikz}

\usetikzlibrary{arrows,automata,fit,positioning,shadows,shapes}
\usetikzlibrary{decorations,snakes,calc} % Nicolas
\usetikzlibrary{patterns}
\tikzstyle{location}=[rectangle, rounded corners, minimum size=12pt, draw=black, fill=blue!10, inner sep=2pt]
\tikzstyle{pta}=[auto, ->, >=stealth']
\tikzstyle{PZG}=[auto, ->, >=stealth']
\tikzstyle{mergingFigure} = [>=stealth', node distance=1.8cm, yscale=.6]
\tikzstyle{location10}=[location, minimum size=10pt]
\tikzstyle{invariant}=[draw=black, dotted, inner sep=1pt, node distance=0] % xshift=1em,
\tikzstyle{final}=[fill=green!70,double]
\tikzstyle{urgent}=[dotted, draw=red, very thick, fill=yellow]
\tikzstyle{bad}=[fill=red]

\tikzstyle{pzgstate} = [
	rectangle split,
	rectangle split horizontal,
	rectangle split parts=2,
	draw,
	rectangle split part align={center,left},
	rounded corners,
	minimum height=0.5 cm,
	minimum width=1.5 cm,
	thick
	]
\tikzstyle{fillred} = [ fill=red!20 ]
\tikzstyle{fillblue} = [ fill=blue!20 ]
\tikzstyle{fillyellow} = [ fill=yellow!20 ]
\tikzstyle{line} = [ draw,-latex',thick ]
\tikzstyle{highlightarrow} = [
	preaction={
		draw,
		line width=0.5mm
	}
]

% IMITATOR style
\tikzstyle{urgent}=[fill=yellow, thick, dotted]
\tikzstyle{private}=[fill=red!50,thick]

\pgfdeclarepatternformonly{north west lines wide}
{\pgfqpoint{-1pt}{-1pt}}
{\pgfqpoint{10pt}{10pt}}
{\pgfqpoint{9pt}{9pt}}
{
	\pgfsetlinewidth{3pt}
	\pgfpathmoveto{\pgfqpoint{0pt}{9.1pt}}
	\pgfpathlineto{\pgfqpoint{9.1pt}{0pt}}
	\pgfusepath{stroke}
}

\tikzset{onslide/.code args={<#1>#2}{%
		\only<#1>{\pgfkeysalso{#2}}
}}

% ORIGINAL COLORS
\definecolor{coloract}{rgb}{0.50, 0.70, 0.30}
\definecolor{colorclock}{rgb}{0.4, 0.4, 1}
\definecolor{colordisc}{rgb}{1, 0, 1}
\definecolor{colorloc}{rgb}{0.4, 0.4, 0.65}

\definecolor{colorparam}{rgb}{.66, 0.4, 0.0}
\definecolor{colorstate}{rgb}{1, 0.4, 0.0}

\definecolor{loccolor1}{rgb}{1, 0.6, 0.45}
\definecolor{loccolor2}{rgb}{0.45, 1, 0.45}
\definecolor{loccolor3}{rgb}{0.8, 0.8, 1}
\definecolor{loccolor4}{rgb}{1, 0.45, 1}
\definecolor{loccolor5}{rgb}{1, 1, 0.45}
\definecolor{loccolor6}{rgb}{0.45, 1, 1}
\definecolor{loccolor7}{rgb}{0.9, 0.6, 0.2}
\definecolor{loccolor8}{rgb}{0.7, 0.4, 1}
\definecolor{loccolor9}{rgb}{0.5, 1, 0.75}
\definecolor{loccolor10}{rgb}{0.8, 0.7, 0.6}
\definecolor{loccolor11}{rgb}{0.6, 0.7, 0.8}
\definecolor{loccolor12}{rgb}{0.2, 0.5, 0.9}
\definecolor{loccolor13}{rgb}{0.5, 0.9, 0.2}
\definecolor{loccolor14}{rgb}{0.9, 0.2, 0.5}
\definecolor{loccolor15}{rgb}{0.7, 0.7, 0.7}
\definecolor{loccolor16}{rgb}{0.8, 0.8, 0.5}

\newcommand{\styleact}[1]{\ensuremath{\textcolor{coloract}{\mathrm{#1}}}}
\newcommand{\styleclock}[1]{\ensuremath{\textcolor{colorclock}{\mathrm{#1}}}}

\newcommand{\styleloc}[1]{\ensuremath{\textcolor{colorloc}{\mathrm{#1}}}}
\newcommand{\styleparam}[1]{\ensuremath{\textcolor{colorparam}{\mathrm{#1}}}}

%%%%%%%%%%%%%%%%%%%%%%%%% COLOR %%%%%%%%%%%%%%%%%%%%%%%%% 
\usepackage{xcolor}
\definecolor{darkgreen}{rgb}{0.0, 0.4, 0.08}
\definecolor{lighterblack}{rgb}{.4, .4, .4}
\definecolor{colorparam}{rgb}{1, 0.6, 0.0}
\definecolor{mygreen}{rgb}{0,0.6,0}
\definecolor{mygray}{rgb}{0.5,0.5,0.5}
\definecolor{mymauve}{rgb}{0.58,0,0.82}

\definecolor{gris}{rgb}{0.6,0.6,0.6}
\definecolor{grisfonce}{rgb}{0.2,0.2,0.2}
\definecolor{turquoise}{rgb}{0, 1, 1}
\definecolor{vertfonce}{rgb}{0,0.85,0}
\definecolor{violet}{rgb}{0.8,0,0.8}
\definecolor{grispale}{rgb}{0.9, 0.9, 0.9}
% Pales
\definecolor{cpale1}{rgb}{1, 0.3, 0.3}
\definecolor{cpale2}{rgb}{0.3, 1, 0.3}
\definecolor{cpale3}{rgb}{0.3, 0.3, 1}
\definecolor{cpale4}{rgb}{1, 0.3, 1}
\definecolor{cpale5}{rgb}{1, 1, 0.3}
\definecolor{cpale6}{rgb}{0.3, 1, 1}
\definecolor{cpale7}{rgb}{0.9, 0.6, 0.2}
\definecolor{cpale8}{rgb}{0.7, 0.4, 1}
\definecolor{cpale9}{rgb}{0.5, 1, 0.75}
\definecolor{cpale10}{rgb}{0.8, 0.7, 0.6}
\definecolor{cpale11}{rgb}{0.6, 0.7, 0.8}
\definecolor{cpale12}{rgb}{0.2, 0.5, 0.9}
\definecolor{cpale13}{rgb}{0.5, 0.9, 0.2}
\definecolor{cpale14}{rgb}{0.9, 0.2, 0.5}
\definecolor{cpale15}{rgb}{0.7, 0.7, 0.7}
\definecolor{cpale16}{rgb}{0.8, 0.8, 0.5}
\definecolor{bleuciel}{rgb}{0.90,0.95,1}
% Vives
\definecolor{cv1}{rgb}{1, 0, 0}
\definecolor{cv2}{rgb}{0, 1, 0}
\definecolor{cv3}{rgb}{0, 0, 1}
\definecolor{cv4}{rgb}{1, 1, 0}
\definecolor{cv5}{rgb}{1, 0, 1}
\definecolor{cv6}{rgb}{0, 1, 1}
\definecolor{cv7}{rgb}{0.8, 0.6, 0.4}
\definecolor{cv8}{rgb}{0.5, 0.5, 1}
\definecolor{cv9}{rgb}{0.55, 0.75, 0.35}
\definecolor{cv10}{rgb}{1, 0.6, 0.1}
\definecolor{cv11}{rgb}{0.6, 0.7, 0.8}
\definecolor{cv12}{rgb}{0.2, 0.5, 0.9}
\definecolor{cv13}{rgb}{0.5, 0.9, 0.2}
\definecolor{cv14}{rgb}{1, 0.3, 0.5}
\definecolor{cv15}{rgb}{0.7, 0.7, 0.7}
\definecolor{cv16}{rgb}{0.8, 0.8, 0.5}
\definecolor{cvorange}{rgb}{1,.8,0.5}
% Commands

\newcommand{\marginX}{\marginnote{\huge{\quad\textbf{!}\quad}}}

\ifdefined\VersionWithComments
  \newcommand{\mbdj}[1]{\textcolor{purple}{\marginX{}[\textbf{Mikael}: #1]}}
  \newcommand{\bfi}[1]{\textcolor{orange}{\marginX{}[\textbf{Baptiste}: #1
  ]}}
  \newcommand{\lp}[1]{\textcolor{green!70!black}{\marginX{}[\textbf{Laure}: #1 ]}}
  \newcommand{\jvdp}[1]{\textcolor{blue!40}{\marginX{}[\textbf{Jaco}:
  #1 ]}}

\else
  \newcommand{\mbdj}[1]{}
  \newcommand{\bfi}[1]{}
  \newcommand{\lp}[1]{}
  \newcommand{\jvdp}[1]{}
\fi

\newcommand{\eg}{e.g.\xspace}
\newcommand{\ie}{i.e.\xspace}

\newcommand{\game}{\ensuremath{G} }

\newcommand{\LocSet}{\ensuremath{L}}
\newcommand{\loc}{\ensuremath{\ell} }

\newcommand{\ClockSet}{\ensuremath{X}}

\newcommand{\ParamSet}{\ensuremath{P}}

\newcommand{\TransSet}{\ensuremath{T} }
\newcommand{\trans}{\ensuremath{t} }

\newcommand{\GuardSet}{\ensuremath{\mathcal{G}} }
\newcommand{\guard}{\ensuremath{g} }

\newcommand{\SubsetSet}[1]{\ensuremath{\mathcal{P}(#1)} }

\newcommand{\LabelSet}{\ensuremath{Act} }

\newcommand{\Inv}{\ensuremath{Inv}}

\newcommand{\val}{\ensuremath{v} }
\newcommand{\ClockValSet}[1]{ \ensuremath{\mathbb{R}_{\geq 0}^{#1}}}
\newcommand{\ParamValSet}[1]{ \ensuremath{\mathbb{Q}_{\geq 0}^{#1}}}
\newcommand{\ValSet}{\ensuremath{V}}

\newcommand{\state}{\ensuremath{s} }
\newcommand{\StateSpace}{\ensuremath{\mathbb{S}}}
\newcommand{\StateSet}{\ensuremath{S}}

\newcommand{\ParamLinearTerm}{\ensuremath{plt} }

\newcommand{\ZoneFormula}{\phi}

\newcommand{\delay}{\ensuremath{\delta} }

\newcommand{\RunSet}{\ensuremath{\mathcal{R}} }
\newcommand{\run}{\ensuremath{r} }

\newcommand{\hist}{\ensuremath{h} }

\newcommand{\TargetSet}{\ensuremath{{R}} }

\newcommand{\zone}{\ensuremath{Z} }

\newcommand{\strat}{\sigma}

\newcommand{\SymbState}{\ensuremath{\xi} }
\newcommand{\SymbStateSet}{\ensuremath{\Xi} }

\newcommand{\TempPred}[1]{ \ensuremath{ #1^{\swarrow} } }
\newcommand{\TempSucc}[1]{ \ensuremath{ #1^{\nearrow} } }

\newcommand{\ProjectParam}[1] { \ensuremath{#1{\downarrow_P}}}

\newcommand{\imitator}{\textsc{Imitator}\xspace}

\newcommand{\Winning}{\ensuremath{W}}
\newcommand{\Win}{\ensuremath{W_{temp}}}

\newcommand{\Cover}{\ensuremath{\mathit{Cover}}}
\newcommand{\Depends}{\ensuremath{\mathit{Depends}}}
\newcommand{\SafePred}{\ensuremath{\mathit{SafePred}}}
\newcommand{\NewWin}{\ensuremath{\mathit{NewWin}}}
\newcommand{\WinningMoves}{\ensuremath{\mathit{WinningMoves}}}
\newcommand{\LosingMoves}{\ensuremath{\mathit{LosingMoves}}}

\newcommand{\WinningParam}{\ensuremath{\mathit{WinningParam}}}
\newcommand{\Uncontrollable}{\ensuremath{\mathit{Uncontrollable}}}
\newcommand{\Controllable}{\ensuremath{\mathit{Controllable}}}
\newcommand{\NewLosing}{\ensuremath{\mathit{NewLosing}}}
\newcommand{\Explored}{\ensuremath{\mathit{Explored}}}
\newcommand{\WaitingExplore}{\ensuremath{\mathit{WaitingExplore}}}
\newcommand{\WaitingUpdate}{\ensuremath{\mathit{WaitingUpdate}}}
\definecolor{ColorLosingProp}{HTML}{1B9E77}
\definecolor{ColorCumuPrune}{HTML}{D95F02}
\definecolor{ColorCovPrune}{HTML}{7570B3}
\definecolor{ColorInc}{HTML}{E7298A}

\newcommand{\ColorBestResult}{green!20}

\newcommand{\ControllableDeadlock}{\ensuremath{\mathit{controller\_deadlock}}}

\crefname{definition}{Def.}{Defs.}
\crefname{theorem}{Thm.}{Thms.}
\crefname{lemma}{Lem.}{Lemmas}
\crefname{line}{Line}{Lines}
\crefname{algorithm}{Alg.}{Algs.}
\crefname{figure}{Fig.}{Figs.}
\crefname{appendix}{Appendix}{Appendices}
\crefname{section}{Sec.}{Sections}

\usetikzlibrary{fit,tikzmark}

\newcounter{tmkcount}
\tikzset{%
    tikzmark suffix={-\thetmkcount},%
    defaultCodeBox/.style={draw=red}%
}

\newcommand{\drawCodeBox}[4]{%
    \begin{tikzpicture}[remember picture,overlay]
        \coordinate (start) at ([yshift=1.4ex]pic cs:#2);
        \coordinate (middle) at (pic cs:#3);
        \coordinate (end) at ([yshift=-0.2ex]pic cs:#4);
        \node[inner sep=2pt,#1,fit=(start) (middle) (end)] {};
    \end{tikzpicture}%
}

\newcommand{\BeginBox}[1]{%
    \drawCodeBox{#1}{beginCB}{middleCB}{endCB}%
    \tikzmark{beginCB}\tikzmark{middleCB}%
}

\newcommand{\SetBoxEast}{%
    \unskip%
    \tikzmark{middleCB}%
}

\newcommand{\EndBox}{%
    \unskip%
    \tikzmark{endCB}%
    \stepcounter{tmkcount}%
}

\newcommand{\BoxedState}[1][defaultCodeBox]{\State\BeginBox{#1}\ignorespaces}

\algdef{S}[WHILE]{BoxedWhile}[2][defaultCodeBox]{\BeginBox{#1}\algorithmicwhile\ #2\ \algorithmicdo}%
\algdef{S}[FOR]{BoxedFor}[2][defaultCodeBox]{\BeginBox{#1}\algorithmicfor\ #2\ \algorithmicdo}%
\algdef{S}[FOR]{BoxedForAll}[2][defaultCodeBox]{\BeginBox{#1}\algorithmicforall\ #2\ \algorithmicdo}%
\algdef{S}[LOOP]{BoxedLoop}[1][defaultCodeBox]{\BeginBox{#1}\algorithmicloop}%
\algdef{S}[REPEAT]{BoxedRepeat}[1][defaultCodeBox]{\BeginBox{#1}\algorithmicrepeat}%
\algdef{S}[IF]{BoxedIf}[2][defaultCodeBox]{\BeginBox{#1}\algorithmicif\ #2\ \algorithmicthen}%
\algdef{C}[IF]{IF}{BoxedElsIf}[2][defaultCodeBox]{\BeginBox{#1}\algorithmicelse\ \algorithmicif\ #2\ \algorithmicthen}%
\algdef{Ce}[ELSE]{IF}{BoxedElse}{EndIf}[1][defaultCodeBox]{\BeginBox{#1}\algorithmicelse}%

\def\orcidID#1{\smash{\href{https://orcid.org/#1}{\protect\raisebox{-1.25pt}{\protect\includegraphics{fig/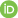}}}}}

\title{On-The-Fly Algorithm for Reachability in\\
    Parametric Timed Games (Extended Version)
    \thanks{This work was partially supported by
        CNRS international PhD programme,
        the CNRS International Research Network CLoVe
        and Innovationsfonden Danmark's DIREC project SIoT (Secure Internet of Things).
        }
    }

\author{
Mikael Bisgaard Dahlsen-Jensen \inst{1}\orcidlink{0000-0003-0641-7635}
\and
Baptiste Fievet \inst{2}\orcidlink{0000-0002-4925-1105}
\and \\
Laure Petrucci \inst{2}\orcidlink{0000-0003-3154-5268}
\and
Jaco van de Pol \inst{1}\orcidlink{0000-0003-4305-0625}
}

\institute{
Aarhus University, Aarhus, Denmark
\\\email{\{mikael,jaco\}@cs.au.dk}
\and
LIPN, CNRS UMR 7030, Université Sorbonne Paris Nord,
Villetaneuse, France
\\ \email{\{Baptiste.Fievet, Laure.Petrucci\}@lipn.univ-paris13.fr}
}

\authorrunning{Dahlsen-Jensen, Fievet, Petrucci, van de Pol}

\titlerunning{OTF Algorithm for Reachability in PTG}

\newcommand{\temptrans}{\to^{\delta}}
\newcommand{\disctrans}{\to^{t}}

\begin{document}

\maketitle

\begin{abstract}
    Parametric Timed Games (PTG) are an extension of the model of Timed Automata. They allow
    for the verification and synthesis of real-time systems, reactive to their environment
    and depending on adjustable parameters. Given a PTG and a reachability objective,
    we synthesize the values of the parameters such that the game is winning for the controller.
    We adapt and implement the On-The-Fly algorithm for parameter synthesis for PTG. 
    Several pruning heuristics are introduced, to improve termination and speed of the algorithm.
    We evaluate the feasibility of parameter synthesis for PTG on two large case studies.
    Finally, we investigate the correctness guarantee of the algorithm: though the problem is undecidable, 
    our semi-algorithm produces all correct parameter valuations ``in the limit''.
\end{abstract}

\section{Introduction}
\label{sec:intro}
%!TEX root=./main.tex

The seminal model of Timed Automata (TA)~\cite{Intro-TA} equips finite automata with real-valued clocks,
to verify real-time reactive systems. Numerous extensions of TA have been proposed.
Timed Games (TG)~\cite{Classic-TG} distinguish controllable and uncontrollable actions, to study the interaction
of a controller with its environment (\eg{} the plant, an attacker, or a system-under-test).
Here, we focus on reachability objectives, which require a strategy for the controller to schedule controllable actions
such that --- no matter which and when uncontrollable actions are executed by the environment ---
a desirable state is reached.

Since precise timing constraints are not always known, one might replace concrete values by symbolic parameters,
to study a whole family of timed systems. This leads to the model of Parametric Timed Automata 
(PTA)~\cite{DBLP:conf/stoc/AlurHV93}. The problem is to find (some or all) values 
for the parameters such that the system satisfies a desired property. Most problems on PTA
are undecidable~\cite{DBLP:journals/sttt/Andre19}, in particular the reachability problem.
Several decidable fragments are known, \eg{} by restricting the number of clocks or
the positions of the parameters, as in L/U PTA~\cite{DBLP:journals/jlp/HuneRSV02}.

This paper tackles the parameter synthesis problem for Parametric Timed Games 
(PTG)~\cite{DBLP:conf/wodes/JovanovicFLR12} with reachability objectives.
We provide the first implementation of a semi-algorithm for PTG parameter synthesis.
It operates on-the-fly, \ie{} it starts solving the game while the symbolic state
space is being generated.
To avoid the generation of the full, potentially infinite, state space,
we also implement several state space reductions. These improve the termination and efficiency of parameter synthesis. 
In particular, we lift inclusion/subsumption from TA to PTG, generalize coverage pruning
and losing state propagation from TG to PTG, and we port cumulative pruning from PTA to PTG.

Interestingly, unlike the situation in PTA~\cite{DBLP:conf/tacas/AndreAPP21} and TG~\cite{OTF-TG},
the algorithm for PTG is not guaranteed to terminate, even if the symbolic state space is finite.
But we claim that if the algorithm terminates, it produces
the precise constraints under which there exists a winning strategy. If the algorithm does not terminate,
the stronger guarantee holds, that (in the limit) it produces all valid parameter valuations, 
provided the waiting list is handled fairly. 

The implementation allows us to study the feasibility of parameter synthesis for
larger case studies in PTG. In particular, we synthesize parameters for the correctness
of a game version of the Bounded Retransmission Protocol~\cite{BRP} and a parametric version
of the Production Cell~\cite{ProdCellCaseStudy2,OTF-TG}. We measure the effectiveness of the individual pruning 
heuristics on these case studies.
It appears that the state space reduction techniques are essential for feasible parameter synthesis.

\paragraph{Related Work.}
For TG, Maler et al.~\cite{Classic-TG} proposed a strategy synthesis algorithm
based on classical reachability games, handling the uncountable set of clock
values using symbolic regions. Cassez et al.~\cite{OTF-TG} improved the efficiency
of TG strategy synthesis by an on-the-fly algorithm, and working with symbolic zones,
represented by DBMs as implemented in UPPAAL Tiga~\cite{DBLP:conf/cav/BehrmannCDFLL07}.
Previous work on PTG initially focused on 
decidable subcases, like the case for bounded integers~\cite{DBLP:conf/atva/JovanovicLR13}
and the fragment of L/U PTG~\cite{DBLP:conf/wodes/JovanovicFLR12,Classic-PTG}.
The latter two papers also provide semi-algorithms for general PTG, either based
on backward fixed points~\cite{Classic-PTG}, or an on-the-fly algorithm~\cite{DBLP:conf/wodes/JovanovicFLR12},
directly extending the work on Timed Games~\cite{OTF-TG}. That paper leaves an implementation
of the algorithm (and hence an evaluation on larger case studies) as future work.
Our implementation extends the infrastructure of \imitator~\cite{DBLP:conf/cav/Andre21},
which so far could only handle PTA. The symbolic data structure is based on Parma's convex 
Polyhedra Library~\cite{DBLP:journals/scp/BagnaraHZ08}.

\paragraph{Contributions.}
(1) We provide the first implementation of a parameter synthesis algorithm for PTG (\cref{sec:algo}), 
and integrate this on-the-fly algorithm in the \imitator toolset~\cite{DBLP:conf/cav/Andre21} (\cref{sec:expe}). 

(2) We devise and implement several pruning heuristics to speed up
parameter synthesis (\cref{sec:optimizations}).

(3) We evaluate the feasibility of parameter synthesis for PTG on two large case
studies, and measure the effect of the various pruning techniques (\cref{sec:expe}).

(4) We carefully introduce the model (\cref{sec:model}) and solution principles (\cref{sec:solve}),
pointing out several semantic subtleties, and find that the semi-algorithm yields all valid 
parameters in the limit (\cref{sec:algo}).

\section{Model of Parametric Timed Games}
\label{sec:model}
%!TEX root=./main.tex

A Parametric Timed Game (PTG) is a structure based on timed automata (TA). Similarly
to classical automata, it is composed of locations connected by discrete transitions.
Moreover, it is equipped with clocks.
Locations are associated a condition on clock valuations (invariant) that must
be satisfied while staying in the location.
An action in a timed automaton is either to take a discrete transition or to let some
time pass.
Discrete transitions have a guard that must be
satisfied in order to take the transition. In a parametric setting, these conditions
use linear terms over clocks and parameters. Parameters 
hold an unspecified value, and remain constant during a run. A discrete transition
also has a subset of clocks which are reset when the transition is taken.

In a two-player timed game, discrete transitions are partitioned
between controllable transitions and uncontrollable environment transitions.

\begin{definition}[PTG]    
A \emph{Parametric Timed Game} is a tuple of the form
$\game = (\LocSet , \ClockSet, \ParamSet, \LabelSet, \TransSet_c, \TransSet_u, \loc_0, \Inv)$ such that
\begin{itemize}[noitemsep, topsep=0pt]
    \item $\LocSet$, $\ClockSet$, $\ParamSet$, $\LabelSet$ are sets of \emph{locations}, \emph{clocks},
        \emph{parameters}, \emph{transition labels}.
    \item $\TransSet = \TransSet_c \cup \TransSet_u$ is the set of \emph{transitions} in
        $\LocSet \times \GuardSet(\ClockSet,\ParamSet) \times \LabelSet \times \SubsetSet
        {\ClockSet} \times \LocSet$, partitioned into sets $\TransSet_c$ of \emph{controllable}
        and $\TransSet_u$ of \emph{uncontrollable} transitions
        of the form $(\styleloc{\loc},\guard,\styleact{a},
        Y,\styleloc{\loc'})$; \styleloc{\loc}, \styleloc{\loc'} are source
        and target locations; $\guard\in \GuardSet(\ClockSet,\ParamSet)$ is the guard
        (see \cref{def:guard}); \styleact{a}
        is the label; $Y$ is the set of clocks to reset.
    \item \styleloc{\loc_0} is the initial location.
    \item $\Inv \; : \; \LocSet \to \GuardSet(\ClockSet,\ParamSet)$ associates an
        \emph{invariant} with each location.
\end{itemize}
\end{definition}

\begin{example}
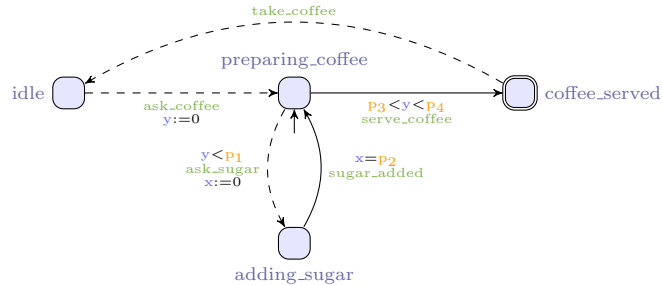
\begin{figure}[b] \centering\scriptsize
    \begin{tikzpicture}[>=stealth', node distance=3cm,initial text=]
        \node[location,initial below,label={above:\styleloc{preparing\_coffee}}] (pc) {};
        \node[location,accepting,label={right:\styleloc{coffee\_served}},right of=pc]
            (cs) {};
        \node[location,label={below:\styleloc{adding\_sugar}},below of=pc,
            node distance=2cm] (as) {};
        \node[location,label={left:\styleloc{idle}},left of=pc] (idle) {};
        \draw[->,dashed] (idle) -- node [below] {$\substack{\styleact{ask\_coffee}
        \\ \styleclock{y}:=0}$} (pc);
        \draw[->,dashed] (pc) edge [bend right] node [left]
            {$\substack{\styleclock{y}<\styleparam{p_1} \\ \styleact{ask\_sugar} \\
                \styleclock{x}:=0}$} (as);
        \draw[->,dashed] (cs) edge [bend right] node [above]
            {$\substack{\styleact{take\_coffee}}$} (idle);
        \draw[->] (pc) -- node [below] {
                            $\substack{\styleparam{p_3}<\styleclock{y}<\styleparam{p_4} \\
                                \styleact{serve\_coffee}}$} (cs);
        \draw[->] (as) edge [bend right] node [right]
            {$\substack{\styleclock{x}=\styleparam{p_2} \\ \styleact{sugar\_added}}$}
            (pc);
    \end{tikzpicture}
    \caption{Parametric Timed Game of the coffee machine.\label{fig:coffee}}
\end{figure}
\Cref{fig:coffee} shows the example of a coffee machine. The controller represents
the coffee machine and the environment represents the user. Uncontrollable transitions
are depicted as dashed arcs.
From \styleloc{idle}, the user can \styleact{ask\_coffee}.
It resets clock \styleclock{y} that will measure
the time since the demand. The machine is then \styleloc{preparing\_coffee}. Action
\styleloc{serve\_coffee} can happen after \styleparam{p_3} (parameter featuring the
time to pour the coffee) and no later than \styleparam{p_4} after the request. While
the coffee is being prepared, the user may \styleact{add\_sugar}. Adding sugar does
not interrupt the pouring of the coffee and lasts \styleparam{p_2}. The coffee cannot
be served while sugar is being added. A situation that may arise is that sugar is
being added to the coffee when the time limit \styleparam{p_4} is met, making it
impossible for the coffee to be served on time. To avoid this issue, \styleact{ask\_sugar}
is disabled after waiting \styleparam{p_1}.

Our goal is to synthesize the constraints on parameters \styleparam{p_1} to \styleparam{p_4}
for the coffee to be timely served. Hence, the initial location is
set to \styleloc{preparing\_coffee}, with both clocks at $0$.
One possible solution to the problem is $\styleparam{p_1}+\styleparam{p_2}\leq\styleparam{p_4} \wedge \styleparam{p_3}<\styleparam{p_4}$.
\end{example}

\subsection{Semantics of Parametric Timed Games}

A \emph{state} of a PTG consists of a location and a valuation of clocks and parameters.

\begin{definition}[valuations]
    A \emph{clock valuation} is a function $\val_\ClockSet \in \ClockValSet
    {\ClockSet}$ assigning a positive real value to each clock.
    A \emph{parameter valuation} $\val_\ParamSet \in \ParamValSet
    {\ParamSet}$ assigns a positive rational value to each parameter.
    A \emph{valuation of the game} $\game$ is a pair $\val = (\val_\ClockSet,\val_\ParamSet)$. 
    The set of all valuations of the game is denoted
    $\ValSet = \ClockValSet{\ClockSet} \times \ParamValSet{\ParamSet}$.
\end{definition}

A \emph{guard} is a constraint that can be satisfied by some valuations of the game.

\begin{definition}[linear terms]
A \emph{linear term} over $\ParamSet$ is a term defined by the following grammar:
$\ParamLinearTerm \; := \; k \; | \; k\styleparam{p} \; | \; \ParamLinearTerm + \ParamLinearTerm$
where $k \in \mathbb{Q}$ and $\styleparam{p} \in \ParamSet$.
\end{definition}

\begin{definition}[guards]
\label{def:guard}
The set of \emph{guards} $\GuardSet(\ClockSet,\ParamSet)$ is the set of formulas defined inductively by the following grammar:
\[\ZoneFormula \; := \top \; | \; \ZoneFormula \land \ZoneFormula
\; | \; \styleclock{x} \sim \ParamLinearTerm
\; | \; \ParamLinearTerm' \sim \ParamLinearTerm\enspace,\]
where $\styleclock{x} \in \ClockSet$,
${\sim}\in\{ <; \leq; =; \geq; > \}$ and $\ParamLinearTerm$, $\ParamLinearTerm'$ are
linear terms over $\ParamSet$.
\end{definition}

We now introduce the notion of \emph{zone} which will be used to solve a PTG.

\begin{definition}[zones]\label{def:zone}
The set of \emph{parametric zones} $\mathcal{Z}(\ClockSet,\ParamSet)$ is the set of formulas defined inductively by the following grammar:
\[\ZoneFormula \; := 
\; \top
\; | \; \ZoneFormula \land \ZoneFormula \; | \; \styleclock{x} \sim \ParamLinearTerm 
\; | \; \styleclock{x}-\styleclock{y} \sim \ParamLinearTerm
\; | \; \ParamLinearTerm' \sim \ParamLinearTerm\enspace,\]
where $\styleclock{x}, \styleclock{y}\in\ClockSet$, ${\sim} \in \{ <; \leq; =; \geq;
> \}$ and $\ParamLinearTerm$ and $\ParamLinearTerm'$ are linear terms over $\ParamSet$.
\end{definition}

Function $\val_\ParamSet$ is naturally extended to linear terms on parameters, by
replacing each parameter in the term with its valuation.
With $\val \models \ZoneFormula$, we denote that
valuation $\val = (\val_\ClockSet, \val_\ParamSet)$ \emph{satisfies} a guard or
a zone $\ZoneFormula$, which is defined in the expected manner.
Zones, guards and invariants can also be seen as a convex set in the space of valuations of the game by
considering those valuations that satisfy the condition. 

Transitions modify clock valuations by letting time pass or resetting clocks.

\begin{definition}[time delays]
    Let $\val = (\val_\ClockSet,\val_\ParamSet)$ be a valuation of the game
    and $\delay \geq 0$ a delay.
\begin{itemize}[noitemsep, topsep=0pt]
\item
    $\forall \styleclock{x}\in\ClockSet: (\val_\ClockSet+\delay)(\styleclock{x}) =
    \val_\ClockSet(\styleclock{x}) + \delay$
\item
    $\val+\delay = (\val_\ClockSet +\delay , \val_\ParamSet)$
\end{itemize}
\end{definition}
\begin{definition}[clock resets]
    Let $\val = (\val_\ClockSet,\val_\ParamSet)$ be a valuation of the game
    and $Y \subseteq \ClockSet$.
    $\val_\ClockSet[Y:=0]$ is the valuation obtained by resetting the clocks in $Y$,
    \ie{}:
\begin{itemize}[noitemsep, topsep=0pt]
\item
    $\forall \styleclock{x}\in Y: \val_\ClockSet[Y:=0](\styleclock{x})=0$ and
    $\forall \styleclock{x}\in\ClockSet\setminus Y:
        \val_\ClockSet[Y:=0](\styleclock{x})=\val_\ClockSet(\styleclock{x})$
\item    
    $\val[Y:=0]= (\val_\ClockSet[Y:=0], \val_\ParamSet)$
\end{itemize}
\end{definition}

We can now define the semantics of a Parametric Timed Game.
\begin{definition}[state]
A \emph{state} of a PTG is a pair $(\styleloc{\loc},\val)$ where $\styleloc{\loc}$
is a location and $\val$ a valuation of the game satisfying its invariant:
$\val \models\Inv(\styleloc{\loc})$. The \emph{state space} is then
$\StateSpace = \{ (\styleloc{\loc},\val)\in \LocSet \times \ValSet \; | \;
\val \models \Inv(\styleloc{\loc}) \}  = \underset{\styleloc{\loc} \in \LocSet}{\bigcup}
\; \{\styleloc{\loc}\} \times \Inv(\styleloc{\loc})\enspace.$
\end{definition}

From a state in this state space, timed and discrete transitions can happen.

\begin{definition}[timed and discrete transitions]
Let $ \delay \in \mathbb{R}_{\geq 0}$ be a time delay. A \emph{timed transition} is
a relation ${\temptrans} \in \StateSpace \times \StateSpace$ s.t.
$\forall (\styleloc{\loc},\val),(\styleloc{\loc'},\val')\in \StateSpace:
(\styleloc{\loc},\val)\temptrans (\styleloc{\loc'},\val')$ iff
$\styleloc{\loc} = \styleloc{\loc'}$ and $\val' = \val + \delay$.

Let $\trans = (\styleloc{\loc},\guard,\styleact{a},Y,\styleloc{\loc'}) \in \TransSet$
be a transition. A \emph{discrete transition} is a relation
${\disctrans} \in \StateSpace \times \StateSpace$ s.t.
$\forall (\styleloc{\loc},\val), (\styleloc{\loc'},\val')\in \StateSpace:
(\styleloc{\loc},\val) \disctrans (\styleloc{\loc'},\val')$ iff
$\val \models \guard$ and $ \val' =  \val[Y:=0]$.
\end{definition}

Let $\vec{0}$ be the clock valuation where all clocks have value $0$.
The set of possible initial states of the PTG is
$\SymbState_0 = \{ (\styleloc{\loc_0}, (\vec{0}, \val_\ParamSet)) \; | \;
\val_\ParamSet \in \ParamValSet{\ParamSet}:
(\vec{0}, \val_\ParamSet) \models \Inv(\styleloc{\loc_0}) \}$.

\begin{definition}[run]\label{def:run}
A \emph{run} of the PTG $G$ is a finite or infinite sequence of states
$\state_0 \state_1 \state_2 \ldots$ s.t. $\state_0 \in \SymbState_0$ and
$\forall i \in \mathbb{N}, \; \state_{2i} \temptrans \state_{2i+1} \disctrans \state_{2i+2}$.
$\RunSet(\game)$denotes the set of runs, and $\RunSet(\game)(\state)$ the set of those
starting from state $\state$.
\end{definition}

A run alternates between (potentially null) delays and discrete transitions,
avoiding runs that let only time pass. However, there might still be Zeno runs
where infinitely many discrete transitions are taken in a finite amount of time.
When there is no ambiguity, we omit $\game$ in the notations.

\begin{example}
Let us consider again the coffee machine in \cref{fig:coffee}. Assume the parameter
valuations are: $\val_\ParamSet(\styleparam{p_1}) = 5$,
$\val_\ParamSet(\styleparam{p_2}) = 2$, $\val_\ParamSet(\styleparam{p_3}) = 5$ and
$\val_\ParamSet(\styleparam{p_4}) = 6$.
Let $\val_\ClockSet=(\val_\ClockSet(\styleclock{x}),\val_\ClockSet(\styleclock{y}))$.
We get the sequence:
$(\styleloc{preparing\_coffee}, ((0,0), \val_\ParamSet))\to^4 
(\styleloc{preparing\_coffee}, ((4,4), \val_\ParamSet)) \to^{\styleact{ask\_sugar}}
(\styleloc{adding\_sugar}, ((0,4), \val_\ParamSet)) \to^2\\
(\styleloc{adding\_sugar}, ((2,6), \val_\ParamSet)) \to^{\styleact{sugar\_added}}
(\styleloc{preparing\_coffee}, ((2,6), \val_\ParamSet))\enspace .$
\end{example}

\begin{definition}[history]
A \emph{history} is a finite prefix of a run. The set of histories of game $\game$
is denoted $\mathcal{H}(\game)$, and those starting in state $\state$ by $\mathcal{H}
(\game)(\state)$. 
\end{definition}

The notion of coverage allows for capturing all states that can occur up to some time,
without a discrete transition.

\begin{definition}[coverage]
Let $\state , \state' \in \StateSpace$ and $\delay \geq 0$ such that $ \state \temptrans
\state'$. The \emph{coverage} of the timed transition is the set of intermediate states traversed: \\
$\Cover( \state \to^{\delay} \state') = \{\state'' \in \StateSpace \; | \; \exists \delay' : 0\leq \delay' \leq \delay \; \land \; \state \to^{\delay'} \state''\}\enspace.$

The coverage of state $\state$ is the set of states obtained from $\state$ with timed
transitions only:
$\Cover(\state) = \{\state' \in \StateSpace \; | \; \exists \delta \geq 0 \; \state \temptrans \state' \}\enspace.$

The coverage of a run $\run = \state_0 \state_1 \state_2 \ldots$ is the union of the coverage of its timed transitions. When finite, it
includes the coverage of its last state $ls(\run)$ :
$\Cover(\run) = \Big( \underset{i\in \mathbb{N}}{\bigcup}\;  \Cover(\state_{2i} \temptrans \state_{2i+1}) \Big) \cup \Cover\big(ls(\run)\big)\enspace.$

\end{definition}

\begin{definition}[reachability objective and winning runs]
Let $\TargetSet\subseteq\LocSet$ be a \emph{reachability objective}.
The set of \emph{winning runs} $\Omega_{Reach}(\TargetSet)$ is the subset of runs that visit $\TargetSet$:
$\Omega_{Reach}(\TargetSet) = \{ \run \in \RunSet \; | \; \exists\,\styleloc{\loc}\in \TargetSet, \exists
\val\in\ValSet: (\styleloc{\loc},\val)\in \Cover(\run)\}\enspace.$
\end{definition}

\begin{example}
In the coffee machine, the objective is to reach from the initial location
\styleloc{prepare\_coffee} the location \styleloc{coffee\_served}. The reachability
objective is thus $\TargetSet = \{\styleloc{coffee\_served}\}$, and
the set of winning runs is $\Omega_{Reach}(\{\styleloc{coffee\_served}\})\enspace.$
\end{example}

\subsection{Strategies in Parametric Timed Games}

We introduce a definition of a strategy that deviates from \cite{OTF-TG}, where at 
each moment, a player decides to either wait, or take a discrete transition. So
their strategy returns values in $T \cup \{wait\}$. The problem with their
strategy is that it is not always clear what should happen: for instance,
given a delay $\delay$, a history $h=\state_0\temptrans \state_1$ and a strategy $\strat$, where $\strat(h)=wait$ for $0<\delay\leq 1$ and $\strat(h)=t$ for $\delay > 1$, it is not clear
when transition $t$ happens: there is no minimal $\delay>1$. Although this works formally, it is less clear what the allowed behaviour of the winning player is precisely. For that reason,
in our definition of strategy, players must decide in advance which delay they will
take. This makes the definition more constructive, clarifying what move the winning
player will actually take (\ie{} perform an action or decide to wait for some particular time) and in the end simplifies the definition of what is winning.

Furthermore, following \cite{OTF-TG}, the definition of strategy is asymmetric for controller and environment:
If both wish to do a discrete transition, we provide priority to the environment;
this corresponds to the safest situation from a software controller point of view.
Another subtle asymmetry is that the controller cannot assume that the environment will take some
uncontrollable transition, even when waiting any longer would violate the location invariant.
While this is in line with the formal definition of strategy in TG \cite{OTF-TG}, experiments with 
UPPAAL Tiga~\cite{DBLP:conf/cav/BehrmannCDFLL07} reveal that in that tool, an uncontrollable discrete transition is actually forced when
reaching the boundary of violating an invariant. 

\begin{definition}[strategy]
A \emph{controller strategy} $\strat_c$ (resp. \emph{environment strategy} $\strat_e$)
models decision-making. It is a function, depending on a history, deciding either to wait
some amount of time (possibly infinite) or to take a discrete transition:
$\strat_c : \mathcal{H} \to \mathbb{R}_{\geq 0}^{\infty} \cup \TransSet_c$, $\strat_e:
\mathcal{H} \to \mathbb{R}_{\geq 0}^{\infty} \cup \TransSet_u$ s.t.\ $\forall h \in
\mathcal{H}$ and $\sigma\in\{\strat_c,\strat_e\}$,
\begin{enumerate}[noitemsep, topsep=0.4ex]
\item \label{tavail} If $\sigma(h) =
(\styleloc{\loc}, g, \styleact{a},Y, \styleloc{\loc'}) \in \TransSet$\\
 then $ls(h) = (\styleloc{\loc},\val)$  such that 
 $\val \models g$ and $\val[Y:=0] \models \Inv(\styleloc{\loc'})$

\item \label{altern}If $( \sigma(h) = \delay \in \mathbb{R}_{\geq 0})$ and the transition $\to^{\delay}$ is available in ls(h)\\then $\sigma(h \to^{\delay} \state) \in \TransSet$
\end{enumerate}
where $h \to^{\delay} \state$ denotes the
history obtained by adding the delay $\delay$ at the end of $h$.
\end{definition}

A strategy can return a discrete transition if its guard is satisfied and the
resulting state satisfies the destination invariant (\ref{tavail}).
To respect the alternation between timed and discrete transitions, we require that
a strategy which returns a finite delay $\delay \geq 0$ on a history returns a discrete
transition after the delay (if the run did not stop by violating an invariant)(\ref{altern}).

A controller strategy $\strat_c$ and an environment strategy $\strat_e$ can be combined
into a global strategy $\strat_{(\strat_c,\strat_e)}$ as follows.
If both players try to take a transition, we consider that the controller cannot guarantee
his transition will be taken, thus the environment chooses.
If one player decides on a discrete transition while the other decides to wait, the
discrete transition is taken.
If both players decide to wait, we wait for the smallest delay.

\begin{definition}[global strategy]
Let $\strat_c$ be a controller strategy and $\strat_e$ an environment strategy.
For all $h \in \mathcal{H}$, the global strategy $\strat_{(\strat_c,\strat_e)}$ is
defined by:
\begin{itemize}[noitemsep,topsep=0pt]
\item $\sigma_e(h) = \trans_u \in \TransSet_u \implies
    \sigma_{(\sigma_c,\sigma_e)}(h) = \trans_u$
\item $\sigma_c(h) = \trans_c \in \TransSet_c \land \sigma_e(h) = \delay\geq 0 \implies
    \sigma_{(\sigma_c,\sigma_e)}(h) = \trans_c$
\item $\sigma_c(h) = \delay\geq 0 \land \sigma_e(h) = \delay' \geq 0 \implies
    \sigma_{(\sigma_c,\sigma_e)}(h) = min (\delay, \delay')$
\end{itemize}
\end{definition}

\begin{example}
Let us look at possible strategies in location \styleloc{preparing\_coffee} of the
running example. The machine can choose \styleact{serve\_coffee} while the user
can select \styleact{ask\_sugar}. If both want to do an action, the strategy chooses
\styleact{ask\_sugar}, thus giving priority to the user. If only one of them wants
to take an action and the other waits, the action is taken. Hence, 
the machine can do \styleact{serve\_coffee} if the user is waiting. 
This is the expected behavior of a coffee machine and its user.
\end{example}

The global strategy induces a unique run, introducing null delays
between two discrete transitions to guarantee the alternation with timed transitions.

\begin{definition}[run induced by a global strategy]
Let an initial state $\state_0$ and a global strategy $\strat$ be given.
The \emph{run induced by strategy} $\strat$ is the unique
$\run_{\sigma} = s_0 s_1 s_2 \ldots$ obtained by:

\begin{itemize}[noitemsep,topsep=0pt]
    \item If $i$ is even, the next transition is a timed transition :
        \begin{itemize}
            \item If $\strat(\langle s_0, \ldots, s_i \rangle)=\trans\in\TransSet$
            a delay $0$ is added:
            $s_i\to^0 s_{i+1} \to^\trans s_{i+2}$
            \item If $\strat(\langle s_0, \ldots, s_i \rangle)$ returns a delay
            $\delta \geq 0$
            and there is a unique state $\state$ such that $s_i \temptrans \state$
            (invariant not violated), then $s_{i+1}=\state$.
            \item Otherwise, the invariant is violated and the run ends.
        \end{itemize}
    \item If $i$ is odd, the next transition is a discrete transition. By the
        properties of a strategy $\strat(\langle s_0, \ldots, s_i \rangle)$ returns
        a transition
        $\trans$ such that there is a unique state $\state$ where $\state_i \to^{\trans}
        \state$. Then, $s_{i+1} = \state$.
        \end{itemize}
\end{definition}

\begin{definition}[winning strategy]
    A controller strategy $\sigma_c$ is said to be \emph{winning} from a state $\state
    \in \StateSpace$ w.r.t.\ a reachability objective $\TargetSet$ if and only if all runs starting
    in $\state$ and \emph{adhering to} $\sigma_c$ are winning
    w.r.t.\ the objective. State $\state$ is said to be winning if there exists a winning strategy from it.
    
    Run $\run$ is \emph{adhering to} a controller strategy $\sigma_c$
    if there exists an environment strategy $\sigma_e$, such that $\run = \run_{\sigma_{(\sigma_c,\sigma_e)}}$.

\end{definition}

The question we now aim to answer is: Given a Parametric Timed Game $\game$ and a
Reachability Objective $\TargetSet$, is there a winning controller strategy from the initial state?
The question depends on the value of the parameters. So, more precisely, we are interested in the question:
\emph{For which parameter valuations is the corresponding initial state winning?}

\section{Solving the Game}
\label{sec:solve}
%!TEX root=./main.tex
In this section, we introduce necessary elements for solving a game.
We first describe the symbolic state space on which the algorithm operates.
Then we characterize the set of winning states as a nested fixed point.

\subsection{Parametric Zone Graph}

Since clock valuations assign real numbers, the timed transition system of a PTG has an uncountable number of states. 
Zones (cf.\ \cref{def:zone}) are a practical tool to regroup these states in more manageable sets.
Recall that zones (like guards and invariants) are conjunctions of simple constraints on valuations,
and can be viewed as sets of valuations.
Our algorithms operate on symbolic states $\SymbState = (\styleloc{\loc},\zone)$, which
consist of a location and a zone. We require that $\zone \subseteq Inv(\styleloc{\loc})$.
For instance, the set of initial states of a PTG $\SymbState_0$ (cf. \cref{def:run}) can be described
by the symbolic state 
$(\styleloc{\loc_0}, \Inv(\styleloc{\loc_0})\wedge \bigwedge_{\styleclock{x}\in X}
\styleclock{x} = 0)$.

In the notation, we identify a symbolic state $(\styleloc{\loc},\zone)$ with
its semantics as the set of concrete states: $\{(l,v) \mid v \vDash Z\}\subseteq\StateSpace$.
We will write $\SymbState.\loc$ to denote the (common) location of a symbolic state $\SymbState$.
Zones are closed under the following operations, which we extend to symbolic states:
\begin{itemize}
    \item Intersection between sets
    \item Temporal successors: $\TempSucc{\SymbState} = \{ \state' \in \StateSpace \mid \exists \state \in \SymbState, \state \temptrans \state' \} $ 
    \item Temporal predecessors: $\TempPred{\SymbState} = \{ \state' \in \StateSpace \mid \exists \state \in \SymbState, \state' \temptrans \state \}$
    \item Discrete successors: $Succ(\trans, \SymbState) = \{ \state' \in \StateSpace \mid \exists \state \in \SymbState, \state \to^{\trans} \state' \} $
    \item Discrete predecessors: $Pred(\trans, \SymbState) = \{ \state' \in \StateSpace
    \mid \exists \state \in \SymbState, \state' \to^{\trans} \state \} $
    \item Projection onto parameters: $\ProjectParam{\xi} = 
    \{ \val_\ParamSet \mid \exists \val_\ClockSet, \loc, \, (\loc, (\val_\ClockSet,\val_\ParamSet))\in\xi \}$
\end{itemize}

\noindent
These operations can be implemented by standard operations on convex 
polyhedra~\cite{DBLP:journals/scp/BagnaraHZ08}.
We also use union, set complement and set difference, which can
return non-convex shapes. These are represented as unions of zones,
still denoting sets of concrete states. All previous operations are extended to unions of zones.

Our algorithms operate on the Parametric Zone Graph (PZG). The PZG of a PTG is not guaranteed to be finite, so our algorithms
are in fact semi-algorithms.

\begin{definition}[Parametric Zone Graph]
    Given a PTA of the form $G=(\LocSet , \ClockSet, \ParamSet, \LabelSet, \TransSet_c,
    \TransSet_u, \styleloc{\loc_0}, \Inv)$, 
    its \emph{Parametric Zone Graph} is defined as the tuple $(\SymbStateSet, \TempSucc{\SymbState_0},
    \Rightarrow^\trans_c,\Rightarrow^\trans_u)$, 
    where $\SymbStateSet \subseteq 2^\StateSpace$;
    $\forall \SymbState,\SymbState'\in \SymbStateSet$ we have $\SymbState \Rightarrow^t_c \SymbState'$ if $\SymbState' = \TempSucc{Succ(\trans,\SymbState)}$ and $\trans\in T_c$; and
    $\SymbState \Rightarrow^t_u \SymbState'$ if $\SymbState' = \TempSucc{Succ(\trans,\SymbState)}$ and $\trans\in T_u$.
\end{definition}

\subsection{Alternating Fixed Point Property}

The algorithm works by alternating between exploring new states and back-propagating
winning-state information from discovered winning states, starting from target states.
The exploration relies on a fixed point property of the set $Reach(\SymbState_0)$,
defined as all symbolic states in some run from an initial state in $\SymbState_0$:
\begin{lemma}[from \cite{DBLP:conf/wodes/JovanovicFLR12}]
    $Reach(\SymbState_0)$ is the smallest set $\StateSet$ containing $\TempSucc{\SymbState_0}$
    such that $\forall \trans \in T,\; \TempSucc{Succ(\trans,\StateSet)} \subseteq \StateSet$.
\end{lemma}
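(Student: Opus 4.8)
The plan is to recognize the statement as the standard characterization of a reachable set as a least pre-fixed point, and to match this fixed point with $Reach(\SymbState_0)$ by a direct double inclusion. I would first introduce the operator $F$ on the complete lattice $(2^{\StateSpace},\subseteq)$ defined by $F(\StateSet) = \TempSucc{\SymbState_0} \cup \bigcup_{\trans \in T} \TempSucc{Succ(\trans,\StateSet)}$. The two requirements of the statement — that $\StateSet$ contains $\TempSucc{\SymbState_0}$ and that $\TempSucc{Succ(\trans,\StateSet)} \subseteq \StateSet$ for every $\trans$ — are together exactly the condition $F(\StateSet)\subseteq \StateSet$, i.e.\ that $\StateSet$ is a pre-fixed point of $F$. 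Since $Succ(\trans,\cdot)$ and $\TempSucc{\cdot}$ are monotone in their set argument (both are defined by existential quantification over the source set), $F$ is monotone. Rather than invoke Knaster--Tarski for the existence of a smallest pre-fixed point, I establish the lemma directly by proving (A) that $Reach(\SymbState_0)$ is itself a pre-fixed point, and (B) that $Reach(\SymbState_0)$ is contained in every pre-fixed point; together these say $Reach(\SymbState_0)$ is the smallest set with the stated closure properties.

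For (A), I would verify the two closure properties from the definition of a run. Every state of $\TempSucc{\SymbState_0}$ lies on a run, being obtained from an initial state of $\SymbState_0$ by a single (possibly null) timed transition, so $\TempSucc{\SymbState_0}\subseteq Reach(\SymbState_0)$. For the closure property, take $s' \in \TempSucc{Succ(\trans,Reach(\SymbState_0))}$. Unfolding the existential definitions of $\TempSucc{\cdot}$ and $Succ(\trans,\cdot)$ yields concrete witnesses $s \in Reach(\SymbState_0)$ and $s''$ with $s \disctrans s''$ and $s'' \temptrans s'$. Since $s$ occurs on some run right after a delay, I extend that run by the discrete transition $\trans$ and the delay to $s'$, respecting the timed/discrete alternation; hence $s' \in Reach(\SymbState_0)$.

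For (B), let $\StateSet$ be any set satisfying the two conditions. I would show $Reach(\SymbState_0)\subseteq \StateSet$ by induction on the number $n$ of discrete transitions preceding the occurrence of a concrete state $s$ on a run. If $n=0$, then $s$ is reached from $\SymbState_0$ by delay only, so $s \in \TempSucc{\SymbState_0}\subseteq \StateSet$. If $s$ occurs after $n+1$ discrete transitions, I isolate the last discrete transition $\trans$ and the preceding post-delay state $s_{\mathrm{prev}}$, which occurs after $n$ discrete transitions and hence lies in $\StateSet$ by the induction hypothesis; then $s \in \TempSucc{Succ(\trans,\{s_{\mathrm{prev}}\})}\subseteq \TempSucc{Succ(\trans,\StateSet)}\subseteq \StateSet$, using monotonicity and the closure property of $\StateSet$.

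I expect the main obstacle to be bookkeeping the alternation of timed and discrete transitions: the PZG packs a discrete transition followed by a time elapse into the single step $\TempSucc{Succ(\trans,\cdot)}$, and the initial node $\TempSucc{\SymbState_0}$ already folds in an initial delay, so I must ensure that post-discrete states are still captured (via a zero delay, $s''\temptrans s''$) and that the ``number of discrete transitions'' induction measure lines up with the iterated applications of $F$. The other point requiring care is the passage between set-level statements and concrete witnessing runs, which is precisely where the distributivity of $\TempSucc{\cdot}$ and $Succ(\trans,\cdot)$ over unions is used.
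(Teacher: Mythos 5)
The paper never proves this lemma: it is imported verbatim from \cite{DBLP:conf/wodes/JovanovicFLR12} (hence the attribution in the lemma header) and used as a black box, so there is no in-paper proof to compare against. Your argument is correct and supplies the missing proof by the standard least-pre-fixed-point route: observing that the two hypotheses on $\StateSet$ say exactly that $\StateSet$ is a pre-fixed point of the monotone operator $F(\StateSet) = \TempSucc{\SymbState_0} \cup \bigcup_{\trans \in \TransSet} \TempSucc{Succ(\trans,\StateSet)}$, then showing (A) $Reach(\SymbState_0)$ is itself a pre-fixed point and (B) it is contained in every pre-fixed point, by induction on the number of discrete transitions preceding a state on its witnessing run. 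Two points needed care, and you identified both. First, the alternation bookkeeping: in step (A) the witness $\state \in Reach(\SymbState_0)$ may occur on its run immediately after a discrete transition rather than after a delay, so extending that run by $\trans$ requires first inserting a null timed transition $\state \temptrans \state$ (legal since delays may be $0$); with that insertion your ``occurs right after a delay'' assumption is without loss of generality, and the same zero-delay device makes post-discrete states land in $\TempSucc{Succ(\trans,\cdot)}$, which your induction in (B) uses implicitly when $\state$ sits at an even position. Second, an interpretive point: the paper glosses $Reach(\SymbState_0)$ as ``all symbolic states in some run'' and elsewhere (Appendix~\ref{ProofInv}) treats it both as a set of concrete states and as a collection of symbolic states; your concrete-state reading, with ``containing'' meaning $\TempSucc{\SymbState_0} \subseteq \StateSet$, is the one under which the lemma's formula type-checks, and since $Succ(\trans,\cdot)$ and $\TempSucc{\cdot}$ distribute over unions it coincides with the union of the symbolic states reachable in the PZG, so nothing is lost. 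In short: no gap; your proof is a sound, self-contained replacement for an argument the paper delegates to its reference.
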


Similarly, the set of winning states $\Winning(\TargetSet)$ of the PTG with reachability objective $\TargetSet$ can be computed as a fixed point. Intuitively, we can win the game if we can take a temporal transition (without being diverted by an uncontrollable action leading us to a non-winning state)
to a state that is either directly winning, or has a controllable transition to a winning state. 
We formalize this with three operators on sets of states.
Let $\Winning$ be the set of winning states of the game.

We call $\WinningMoves(\StateSet) = \{ \state \in \StateSpace \mid \exists \trans \in T_c,  \state' \in \StateSet, \; \state \to^{\trans} \state' \}$ the set of states that have access to a controllable action leading to $\StateSet$.
When applied to $\Winning$, it gives us the states with a controllable action to reach $\Winning$, from which we have a winning strategy.
$\WinningMoves(\StateSet)$ is increasing in $\StateSet$.
It can be computed using the previous operators as 
$\WinningMoves(\StateSet) = \bigcup_{\trans_c \in \TransSet_c} Pred(t_c, \StateSet)\enspace.$

We call $\Uncontrollable(\StateSet) = \{ \state \in \StateSpace \mid \exists \trans \in T_u,  \state' \notin \StateSet, \; \state \to^{\trans} \state' \}$ the set of states where an uncontrollable action leads to a state outside $\StateSet$.
When applied to \Winning, it gives us the states where the environment can derail us into a state outside $\Winning$, from which we have no winning strategy. $\Uncontrollable(\StateSet)$ is decreasing in $\StateSet$.
It can be computed from the operators from the previous subsection by 
$\Uncontrollable(\StateSet) = \bigcup_{\trans_u \in \TransSet_u} Pred(t_u, \StateSpace \backslash \StateSet)\enspace.$

Finally, we call $\SafePred(\StateSet_1, \StateSet_2)$ the set of states that
can reach $\StateSet_1$ by a temporal transition while avoiding $\StateSet_2$.
Since it aims to be applied to reach winning moves while avoiding uncontrollable actions, if a state is in the intersection of $\StateSet_1$ and $\StateSet_2$, priority is given to the environment and the state is not considered safe. 
$\SafePred(\StateSet_1,\StateSet_2) = \{ \state \in \StateSpace \mid \exists \state' \in \StateSet_1, \state \temptrans \state' \land \Cover( \state\temptrans \state') \cap \StateSet_2 = \emptyset \}$.
$\SafePred(\StateSet_1, \StateSet_2)$ is increasing in $\StateSet_1$ and decreasing in $\StateSet_2$.

Thanks to the work of Cassez et al.~\cite{OTF-TG}, $\SafePred$
can be computed between zones using the precedent operations and extended to union of zones:

\begin{lemma}[from \cite{OTF-TG} for TG, \cite{Classic-PTG} for PTG]
$$\SafePred( \StateSet_1, \StateSet_2) = (\TempPred{\StateSet_1} \backslash \TempPred{\StateSet_2}) \cup  \TempPred{( (\StateSet_1 \cap (\TempPred{\StateSet_2})) \backslash \StateSet_2 )}$$
    $$ \SafePred( \underset{i}{\bigcup} \StateSet_{1i}, \underset{j}{\bigcup} \StateSet_{2j} ) = \underset{i}{\bigcup} \big( \TempPred{\StateSet_{1i}} \cap \underset{j}{\bigcap} \SafePred( \StateSet_{1i} \big), \StateSet_{2j}) $$
\end{lemma}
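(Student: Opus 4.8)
The plan is to establish both identities by double inclusion, after reducing every membership question to an elementary computation along a single time ray. Since a timed transition keeps the location fixed and only increases clocks, I fix a concrete state $\state=(\styleloc{\loc},\val)$ and record the two delay sets $I_1=\{\delta\ge 0\mid (\styleloc{\loc},\val+\delta)\in\StateSet_1\}$ and $I_2=\{\delta\ge 0\mid (\styleloc{\loc},\val+\delta)\in\StateSet_2\}$. By definition, $\state\in\SafePred(\StateSet_1,\StateSet_2)$ exactly when there is a $\delta\in I_1$ whose initial segment is clean, i.e.\ $[0,\delta]\cap I_2=\emptyset$, and the coverage $\Cover(\state\temptrans(\styleloc{\loc},\val+\delta))$ is precisely the ray segment indexed by $[0,\delta]$. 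The one structural fact I will use is that $\StateSet_2$ is a convex zone, so $I_2$ is an \emph{interval}; this is what makes ``$[0,\delta]$ avoids $I_2$'' equivalent to ``$\delta$ lies strictly below all of $I_2$''.

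For the first (single-zone) identity I argue ray-by-ray. For the inclusion $\SafePred(\StateSet_1,\StateSet_2)\subseteq\text{RHS}$, take a clean witness $\delta^\ast\in I_1$. If $I_2=\emptyset$ then $\state$ can reach $\StateSet_1$ but not $\StateSet_2$, so $\state\in\TempPred{\StateSet_1}\setminus\TempPred{\StateSet_2}$; otherwise $[0,\delta^\ast]\cap I_2=\emptyset$ forces $I_2\subseteq(\delta^\ast,\infty)$, and the state $(\styleloc{\loc},\val+\delta^\ast)$ then lies in $\StateSet_1$, outside $\StateSet_2$, yet still in $\TempPred{\StateSet_2}$, placing $\state$ in $\TempPred{((\StateSet_1\cap\TempPred{\StateSet_2})\setminus\StateSet_2)}$. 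For the reverse inclusion, states in the first summand are trivially safe (they reach $\StateSet_1$ and, being outside $\TempPred{\StateSet_2}$, never meet $\StateSet_2$); for the second summand I take an intermediate target $\state''\in(\StateSet_1\cap\TempPred{\StateSet_2})\setminus\StateSet_2$ reachable from $\state$ by some delay $\delta''$ and must show the segment up to $\state''$ stays out of $\StateSet_2$. This is exactly the step that needs convexity: from $\delta''\notin I_2$ and the existence of a delay in $I_2$ at or after $\delta''$, convexity of the interval $I_2$ forbids any delay in $I_2$ before $\delta''$, so the segment is clean. I will note that the identity genuinely fails for non-convex $\StateSet_2$ (a two-point $I_2$ straddling the target breaks this summand), which is why the closed form is only used between zones.

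For the union identity I split it into two distribution laws and then compose them. First, $\SafePred$ is existential in its first argument, giving $\SafePred(\bigcup_i\StateSet_{1i},A)=\bigcup_i\SafePred(\StateSet_{1i},A)$ with no further hypotheses. Second, I prove $\SafePred(\StateSet_1,\bigcup_j\StateSet_{2j})=\TempPred{\StateSet_1}\cap\bigcap_j\SafePred(\StateSet_1,\StateSet_{2j})$: the inclusion $\subseteq$ is immediate, since one clean witness avoiding $\bigcup_j\StateSet_{2j}$ avoids each $\StateSet_{2j}$ separately. The interesting direction is $\supseteq$, where I am only given a possibly different witness $\delta_j\in I_1$ for each $j$ with $[0,\delta_j]$ avoiding $\StateSet_{2j}$, and must manufacture a single common delay. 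The trick is to take $\delta^\ast=\min_j\delta_j$, attained at some $j_0$: then $(\styleloc{\loc},\val+\delta^\ast)\in\StateSet_1$ because $\delta^\ast$ is $j_0$'s own witness, while $[0,\delta^\ast]\subseteq[0,\delta_j]$ shows the shorter coverage avoids every $\StateSet_{2j}$, hence their union. Composing the two laws yields the stated right-hand side; the extra factor $\TempPred{\StateSet_{1i}}$ is subsumed by each $\SafePred(\StateSet_{1i},\StateSet_{2j})$ when the $j$-index set is non-empty, and is exactly what makes the degenerate empty-$\StateSet_2$ case reduce to $\TempPred{\StateSet_{1i}}$.

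The main obstacle is the $\supseteq$ direction of the single-zone formula, since it is the only place where convexity of the zone $\StateSet_2$ is indispensable and where a careless argument would wrongly admit states whose time-path crosses $\StateSet_2$ before reaching the target. The minimal-delay argument for the union over the $\StateSet_{2j}$, by contrast, is robust and needs only that the index set be finite so that the minimum is attained.
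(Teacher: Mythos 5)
Your proof is correct, but there is nothing in the paper to compare it against: the paper states this lemma as an imported result, citing Cassez et al.~\cite{OTF-TG} for the TG case and \cite{Classic-PTG} for the PTG case, and gives no proof of its own (its appendices only prove the fixed-point lemma and the algorithm invariants). Your blind argument therefore supplies a self-contained proof of a fact the paper takes on faith, and it is sound: the reduction of every membership question to the delay sets $I_1,I_2$ along a single time ray is the right abstraction (legitimate because timed transitions fix the location and the paper's $\Cover$ includes both endpoints); the case split on $I_2=\emptyset$ versus $I_2\neq\emptyset$ gives exactly the two summands of the single-zone identity; and you correctly isolate the one place where convexity of $\StateSet_2$ is indispensable --- ruling out an element of $I_2$ \emph{before} the witness $\delta''$ given one strictly after it --- with a valid two-point counterexample showing the closed form fails for non-convex second arguments. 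Your factorization of the union identity into first-argument distributivity plus the identity $\SafePred(\StateSet_1,\bigcup_j\StateSet_{2j})=\TempPred{\StateSet_1}\cap\bigcap_j\SafePred(\StateSet_1,\StateSet_{2j})$, proved by taking the minimum witness delay, is also correct, and you rightly note both that finiteness of the $j$-index set is needed for the minimum to be attained (harmless, since the algorithm only forms finite unions of zones) and that the $\TempPred{\StateSet_{1i}}$ conjunct exists to handle an empty $j$-set --- which, incidentally, resolves the garbled parenthesization in the paper's displayed formula. One small imprecision: your opening claim that convexity is ``what makes `$[0,\delta]$ avoids $I_2$' equivalent to `$\delta$ lies strictly below all of $I_2$'\,'' is off, since that particular equivalence holds trivially for arbitrary sets; the place convexity genuinely enters is the one you actually use later, in the $\supseteq$ direction for the second summand. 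This does not affect correctness.
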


We can now formulate the fixed point property followed by $\Winning$.

\begin{restatable}{lemma}{lemfixpoint}
\label{lem:fixpoint}
    $\Winning(\TargetSet)$ is the smallest set $\StateSet$ containing $\TargetSet$
    such that \\  $\SafePred( \StateSet \cup \WinningMoves(\StateSet) , \Uncontrollable
    (\StateSet)) \subseteq \StateSet\enspace.$
\end{restatable}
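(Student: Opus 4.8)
The plan is to establish the two defining properties of a smallest set separately: that $\Winning(\TargetSet)$ is \emph{itself} a set containing $\TargetSet$ and closed under the operator $F(\StateSet) := \SafePred(\StateSet \cup \WinningMoves(\StateSet), \Uncontrollable(\StateSet))$, and that it is \emph{contained} in every such set. These two inclusions directly yield ``smallest''; monotonicity of $F$ (the first argument of $\SafePred$ grows with $\StateSet$, the second shrinks, and $\SafePred$ is increasing resp.\ decreasing in them) is only needed to see that the notion of a smallest set is well defined. Before the two inclusions I would record two closure facts proved straight from the definitions: (i) any state whose location lies in $\TargetSet$ is winning, since the run's own initial coverage already meets the objective, so $\TargetSet \subseteq \Winning(\TargetSet)$; and (ii) because the environment has priority on simultaneous moves and can always fire an available uncontrollable transition, if $s$ is winning and $s \disctrans s'$ for an available $t \in \TransSet_u$, then $s'$ must be winning too.

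For the first inclusion, $F(\Winning(\TargetSet)) \subseteq \Winning(\TargetSet)$, I would take $s \in F(\Winning(\TargetSet))$ and build a winning strategy. By definition of $\SafePred$ there is a delay with $s \temptrans s'$ such that $s' \in \Winning(\TargetSet) \cup \WinningMoves(\Winning(\TargetSet))$ while $\Cover(s \temptrans s')$ avoids $\Uncontrollable(\Winning(\TargetSet))$. The controller waits along this coverage. The crucial observation is that every intermediate state $s''$ lies outside $\Uncontrollable(\Winning(\TargetSet))$, which means \emph{no} uncontrollable transition from $s''$ leaves $\Winning(\TargetSet)$; hence whatever the environment does during the wait lands in a winning state, from which the controller switches to the corresponding winning substrategy. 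If instead the wait completes undisturbed, $s'$ is either already winning or admits a controllable transition into $\Winning(\TargetSet)$, and the controller continues with a winning substrategy. Composing these substrategies produces a strategy under which every adhering run reaches a winning state, hence $\TargetSet$.

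For the second inclusion (minimality), let $\StateSet$ satisfy $\TargetSet \subseteq \StateSet$ and $F(\StateSet) \subseteq \StateSet$, and suppose for contradiction that some winning $s$ lies outside $\StateSet$. Fixing a winning strategy $\sigma_c$, I would construct an environment response keeping the play inside $\Winning(\TargetSet) \setminus \StateSet$ forever, never visiting $\TargetSet$ (possible since $\TargetSet \subseteq \StateSet$), contradicting that $\sigma_c$ wins. The engine is the contrapositive of $F(\StateSet) \subseteq \StateSet$: from $s \notin \StateSet$ we get $s \notin \SafePred(\StateSet \cup \WinningMoves(\StateSet), \Uncontrollable(\StateSet))$, so for the delay dictated by $\sigma_c$ either the coverage meets $\Uncontrollable(\StateSet)$ --- whereupon the environment waits to that point and fires an uncontrollable transition out of $\StateSet$, reaching by (ii) another winning state outside $\StateSet$ --- or the reached state lies outside $\StateSet \cup \WinningMoves(\StateSet)$, so the controllable transition eventually chosen by $\sigma_c$ also stays outside $\StateSet$ while remaining winning; the degenerate case where $\sigma_c$ waits forever yields at once a run that never reaches $\TargetSet$. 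Iterating builds an infinite (or invariant-terminated) adhering run avoiding $\TargetSet$.

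The step I expect to be the main obstacle is exactly this last construction: checking that the case analysis on $\sigma_c$'s decision is exhaustive and that the pieced-together environment choices form a \emph{legal} environment strategy whose induced run both adheres to $\sigma_c$ and avoids $\TargetSet$. The delicate points are the interaction of the environment's priority on simultaneous moves with the controller's chosen delay; the fact that locations are constant along timed transitions, so that staying outside $\StateSet \supseteq \TargetSet$ genuinely avoids the objective (the objective can only be entered through transition endpoints); and the boundary convention in $\SafePred$ that assigns intersection states to the environment, which must align precisely so that ``not safe'' converts into a concrete environment move at each step.
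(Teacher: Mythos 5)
Your proposal is correct, and its two substantive arguments coincide with the paper's: closure of $\Winning(\TargetSet)$ under the operator is proved by waiting along a coverage that avoids $\Uncontrollable(\Winning(\TargetSet))$ and splicing in winning substrategies (this is exactly \cref{lem:propwinning}), and minimality is proved by converting the negation of $\SafePred$-membership into a concrete environment counter-strategy that keeps the induced run outside the closed set at every discrete endpoint, hence (locations being constant along timed transitions) outside $\TargetSet$ --- exactly the paper's recursive construction. The difference is architectural: you verify the two defining properties of ``smallest'' directly, namely that $\Winning(\TargetSet)$ contains $\TargetSet$, is closed, and is contained in \emph{every} closed $\StateSet \supseteq \TargetSet$; the paper instead first manufactures the smallest closed set $\Winning_?$ (whose existence rests on the intersection-stability result, \cref{lem:winintersect}) and then proves the two inclusions $\Winning(\TargetSet) \subseteq \Winning_?$ and $\Winning_? \subseteq \Winning(\TargetSet)$. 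Your route is slightly leaner: it renders \cref{lem:winintersect} unnecessary, and the counter-strategy argument is run at its natural level of generality (an arbitrary closed superset of the targets) rather than only for $\Winning_?$; what the paper's packaging buys is that well-definedness of ``the smallest closed set'' is settled up front. Two points you gloss over, which a full write-up would need and which the paper treats explicitly: first, splicing substrategies must merge consecutive delays (a strategy cannot answer a delay twice in a row, so if the substrategy at the winning state opens with a delay $\delta_2$ reached after a wait $\delta_1$, the composed strategy must already answer $\delta_1+\delta_2$ at the earlier history); second, your closure fact (ii) about uncontrollable successors of winning states, while true, is never actually needed --- keeping the play outside $\StateSet$ alone, rather than inside $\Winning(\TargetSet)\setminus\StateSet$, already yields the contradiction.
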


\proof{See \cref{WinningProp}.}

\section{Algorithm and Correctness}
\label{sec:algo}
%!TEX root=./main.tex

\begin{algorithm}
  \caption{For PTG $\game = (\LocSet, \ClockSet, \ParamSet, \LabelSet, \TransSet_c,
  \TransSet_u, \styleloc{\loc_0}, \Inv)$ and reachability objective $\TargetSet$,
  returns the set of all parameter valuations that win the game.\label{alg:main}}
  \begin{algorithmic}[1]
    \State $\Explored, \WaitingUpdate, \WaitingExplore \gets \emptyset, \emptyset, \{\TempSucc{\SymbState_0}\}$
    \Comment{Symbolic state sets}
    \State $\mathit{Win} := \{\}$
    \Comment {Map from symbolic states to unions of zones}
    \State $\Depends := \{\}$
    \Comment {Map from symbolic states to sets of symbolic states}
    \State $\WinningParam := False$
    \Statex
    \Function{solvePTG}{}
    \While{$\lnot$\Call{terminate}{ \hspace*{-1mm}}}
    \State Choose either \Call{explore}{ \hspace*{-1mm}} or  \Call{update}{ \hspace*{-1mm}}
    \EndWhile
    \State \Return $\WinningParam$
    \EndFunction
    \Statex

    \Procedure{explore}{}
    \State $\SymbState \gets extract(\WaitingExplore)$
    \For{\trans transition from \SymbState :}
    \State $\SymbState' := \TempSucc{Succ(\trans, \SymbState)}$
    \State $\Depends[\SymbState'] \gets \Depends[\SymbState'] \cup \{\SymbState\}$
    \If{$\SymbState'$ not in $\Explored$}
    \State $\WaitingExplore \gets \WaitingExplore \cup \{\SymbState'\}$
    \EndIf
    \EndFor
    \If{$\SymbState.\loc \in \TargetSet$}
    \State $\mathit{Win}[\SymbState] \gets \SymbState$
    \State $\WaitingUpdate \gets \WaitingUpdate \cup \Depends[\SymbState]$
    \EndIf
    \State $\WaitingUpdate \gets \WaitingUpdate \cup \{\SymbState\}$
    \State $\Explored \gets \Explored \cup \{\SymbState\}$
    \EndProcedure
    \Statex
    \Procedure{update}{}
    \State $\SymbState \gets extract(\WaitingUpdate)$
    \State $Uncontrollable \gets \hspace{-1em}\underset{\{(\SymbState', t) \mid  \SymbState
    \Rightarrow^{t}_u\SymbState'\}}{\bigcup}\hspace{-1em} Pred (t, \SymbState' \setminus
    \mathit{Win}[\SymbState'])$
    \State $\WinningMoves \gets \hspace{-1em}\underset{\{(\SymbState', t) \mid  \SymbState
    \Rightarrow^{t}_c\SymbState'\}}{\bigcup}\hspace{-1em} Pred (t, \mathit{Win}[\SymbState'])$
    \State $\NewWin := \SafePred( Win[\SymbState] \cup \WinningMoves, \: Uncontrollable) \:  \cap \: \SymbState$
    \If{$\NewWin \not\subseteq \mathit{Win}[\SymbState]$}
    \State $\WaitingUpdate \gets \WaitingUpdate \cup \Depends[\SymbState]$
    \State $\mathit{Win}[\SymbState] \gets  \mathit{Win}[\SymbState] \cup \NewWin$
    \EndIf
    \State $\WinningParam \gets \ProjectParam{(\mathit{Win}[\SymbState] \cap \SymbState_0)}$ \label{line:report}
    \EndProcedure
    \Statex
    \Function{terminate}{}
    \State \Return $\WaitingExplore = \emptyset \land \WaitingUpdate = \emptyset$
    \EndFunction
  \end{algorithmic}
\end{algorithm}

We can now introduce the algorithm for parameter synthesis for PTG.
\Cref{alg:main} explores the state space and creates a map of symbolic states
connected by a discrete transition through the operation $\TempSucc{Succ(\trans_s, \_ )}$. 
Simultaneously, any newly found winning states in a symbolic state $\SymbState$, starting from the target locations, are propagated by marking the predecessors of $\SymbState$ for an update. 
To update a symbolic state $\SymbState$, we compute $\SafePred({\mathit{Win} \cup
\WinningMoves(\mathit{Win})},$ $ \Uncontrollable(\mathit{Win}))$ within $\SymbState$
and add the result to $\mathit{Win}[\SymbState]$. If new winning states are found, we mark $\SymbState$ predecessors for an update.

\smallskip

The algorithm is non-deterministic: it does not describe how we choose between explore
and update, and which symbolic state in the waiting lists to explore or update. These choices are left abstract on purpose, as optimization opportunities.
A fair strategy would be to join \WaitingExplore\ and \WaitingUpdate\ in a single queue, whose head
determines which operation to apply next. In our implementation,
we prioritized back-propagation from \WaitingUpdate.

\subsection{Invariants and Correctness}

Recall that the algorithm works on a zone graph. We are looking for subsets of winning states within symbolic states. The same state may appear in different symbolic states and may not have the same status in each instance. Therefore, the set $\Win$, the winning states found by the algorithm so far, and $\Winning$, the set of all winning states, also take into account the symbolic state considered. Formally,  
 $\Winning$ consists of all pairs $(\SymbState, \state)$ where $\state$ is a winning state contained in the symbolic state $\SymbState$, and
$\Win = \hspace{-2ex}\underset{\SymbState \in \Explored}{\bigcup}\hspace{-2ex} \{\SymbState\} \times Win[\SymbState]$.

\begin{restatable}{theorem}{thminv} \label{thm:inv}
\label{thm:inv}
  These invariants hold during the execution of the algorithm:
  \begin{enumerate}[topsep=0pt]
      \item $ \TempSucc{\SymbState_0}\in\Explored$.
    \item $\forall \SymbState \in \Explored, \trans\in \TransSet, \SymbState'$, if $\SymbState \Rightarrow^{\trans}\SymbState'$, 
    then $\SymbState' \in {\WaitingExplore \cup \Explored}$ \label{inv:fp_expl}
    \item $\forall \SymbState \in \Explored$, if $\SymbState.l \in \TargetSet$, $\{\SymbState\} \times \SymbState \subseteq \Win $.
    \item $\Win \subseteq \Winning$. \label{inv:correctness}
    \item $\forall \SymbState \in \Explored$, we have either $\SymbState \in \WaitingUpdate$ or
    $\SafePred( \Win \cup \WinningMoves(\Win) , \: \Uncontrollable(\Win) ) \cap (\{ \SymbState\} \times \SymbState) \subseteq \Win$. \label{inv:fp_upd}
  \end{enumerate}
\end{restatable}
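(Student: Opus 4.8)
The plan is to prove all five invariants simultaneously by induction on the number of iterations of the \textsc{while} loop in \textsc{solvePTG}, reading each invariant as an assertion that holds whenever control reaches the loop head. For the base case, note that after initialization $\Explored=\emptyset$ and $\Win=\emptyset$, so invariants (3)–(5) hold vacuously and (4) trivially; invariant (1) is the only one not immediately true, but since $\WaitingUpdate=\emptyset$ initially the first executed operation must be \textsc{explore}, which extracts $\TempSucc{\SymbState_0}$ and inserts it into $\Explored$, establishing (1) from then on. For the inductive step I would assume the five invariants before an iteration and treat the cases \textsc{explore} and \textsc{update} separately. Invariants (1)–(3) are the routine part: $\Explored$, $\WaitingExplore\cup\Explored$, $\WaitingUpdate$, and each $\mathit{Win}[\SymbState]$ only ever grow; extraction from $\WaitingExplore$ merely moves a state into $\Explored$; and the only way a target location enters $\Explored$ is via the branch of \textsc{explore} that sets $\mathit{Win}[\SymbState]\gets\SymbState$. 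So these follow by monotonicity and inspection of the two procedures.

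Invariant (4), soundness, is where the game semantics enters. In \textsc{explore} the only states added to $\Win$ carry a target location, and any state whose location lies in $\TargetSet$ is winning because its own coverage already meets $\TargetSet$; hence $\Win\subseteq\Winning$ is preserved. In \textsc{update} the new contribution is $\NewWin=\SafePred(\mathit{Win}[\SymbState]\cup\WinningMoves,\Uncontrollable)\cap\SymbState$. I would first argue a local-to-global correspondence: because the discrete moves out of $\SymbState$ are exactly the $\Rightarrow^{t}_c$/$\Rightarrow^{t}_u$ edges, because symbolic states are closed under time elapse (so $\SafePred$ taken within $\SymbState$ stays local after intersecting with $\SymbState$), and because the winning status of each discrete successor is recorded in $\mathit{Win}[\SymbState']$, the locally computed $\WinningMoves$ and $\Uncontrollable$ are the restrictions to $\SymbState$ of the global operators $\WinningMoves(\Win)$ and $\Uncontrollable(\Win)$. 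Then, using $\Win\subseteq\Winning$ (the induction hypothesis), monotonicity of the three operators ($\SafePred$ and $\WinningMoves$ increasing, $\Uncontrollable$ decreasing), and the fixed-point inclusion of \cref{lem:fixpoint} applied to $\Winning$, I obtain $\NewWin\subseteq\Winning$, so the enlarged $\Win$ still satisfies (4).

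Invariant (5) is the crux and the main obstacle, and it splits into two independent difficulties. The first is to show that a single \textsc{update} of $\SymbState$ actually re-establishes the local fixed-point inclusion at $\SymbState$, even though the operator uses the old $\mathit{Win}[\SymbState]$ in its first argument while we then enlarge $\mathit{Win}[\SymbState]$ by $\NewWin$. This reduces to an idempotency property of $\SafePred$ in its first argument, namely $\SafePred(A\cup\SafePred(A,B),B)=\SafePred(A,B)$, which holds because timed transitions compose (a delay $\delta_1$ followed by $\delta_2$ whose joint coverage avoids $B$ is a single delay $\delta_1+\delta_2$ with the same coverage) and because $\SafePred$ distributes over unions in its first argument. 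Combined with the local-to-global correspondence, this yields $\SafePred(\Win\cup\WinningMoves(\Win),\Uncontrollable(\Win))\cap(\{\SymbState\}\times\SymbState)\subseteq\Win$ after the update. The second difficulty is the dependency bookkeeping: whenever $\mathit{Win}[\SymbState]$ strictly grows (in either procedure), every explored state whose fixed-point condition mentions $\mathit{Win}[\SymbState]$ must be returned to $\WaitingUpdate$. Here I would use that such states are precisely the predecessors of $\SymbState$, that each predecessor $\SymbState''$ recorded $\SymbState''\in\Depends[\SymbState]$ when it was explored, and that both procedures insert $\Depends[\SymbState]$ into $\WaitingUpdate$ exactly when $\mathit{Win}[\SymbState]$ changes. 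States that are not predecessors of $\SymbState$ have an unchanged local fixed-point condition and keep their invariant by hypothesis, while $\SymbState$ itself was just shown to satisfy the inclusion.

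In short, the easy content is monotonic bookkeeping (invariants (1)–(3)), the semantic content is soundness via \cref{lem:fixpoint} (invariant (4)), and the delicate content is invariant (5). The steps I expect to require the most care are the precise statement and proof of the local-to-global correspondence between the per-symbolic-state operators used in \textsc{update} and the global operators on $\Win$, and the idempotency of $\SafePred$; together they justify that one pass of \textsc{update} suffices to stabilize a symbolic state and that the $\Depends$ map tracks exactly the states that may have been destabilized.
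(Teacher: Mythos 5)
Your proposal is correct and follows essentially the same route as the paper's proof (\cref{ProofInv}): induction over the execution, monotonic bookkeeping for invariants (1)--(3), the local-to-global correspondence between the per-state quantities computed in \textsc{update} and the global operators, combined with monotonicity and the fixed-point property of $\Winning$, for invariant (4), and $\Depends$-based re-insertion into $\WaitingUpdate$ for invariant (5). Two remarks on invariant (5): your explicit idempotency property $\SafePred(A \cup \SafePred(A,B), B) = \SafePred(A,B)$ makes rigorous a step the paper only asserts (namely that the operator restricted to $\SymbState$ can change only when $\mathit{Win}[\SymbState']$ changes for some discrete successor $\SymbState'$), which is a genuine gain in precision over the paper's wording; conversely, your closing claim that ``$\SymbState$ itself was just shown to satisfy the inclusion'' does not hold as stated when $\SymbState$ has a self-loop (then $\WinningMoves$ and $\Uncontrollable$ at $\SymbState$ also change, so idempotency alone does not re-establish the inclusion), but this is harmless because $\SymbState \in \Depends[\SymbState]$ is then returned to $\WaitingUpdate$ by your own bookkeeping step --- exactly the ``or simultaneously in the case of a self loop'' escape used in the paper's proof.
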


\begin{proof}
See \cref{ProofInv}.
\end{proof}

Invariant 4 guarantees that even if the algorithm times out the winning states found
by the algorithm are indeed winning. 
Furthermore, if the algorithm terminates and the waiting lists are empty, we can apply the fixed point properties of 
$Reach(\SymbState_0)$ and $\Winning$, and $\Win$ corresponds exactly to $\Winning$ 
over the explored symbolic states that cover $Reach(\SymbState_0)$.

\begin{restatable}{theorem}{algcorrect}
  \Cref{alg:main} is correct (when it terminates).
\end{restatable}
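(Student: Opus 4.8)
The plan is to show that when Algorithm~\ref{alg:main} terminates, the computed map $\mathit{Win}$ coincides with the true winning-state map $\Winning$ restricted to the explored symbolic states, and that these explored states cover $Reach(\SymbState_0)$; the correctness of $\WinningParam$ then follows from the projection on Line~\ref{line:report}. The two directions are soundness ($\Win \subseteq \Winning$) and completeness ($\Winning \cap (\Explored \times \StateSpace) \subseteq \Win$), and I would treat them separately.

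\emph{Soundness} is essentially free: it is exactly Invariant~\ref{inv:correctness} of \cref{thm:inv}, which holds throughout the execution and in particular at termination. So the only thing to argue here is that the states recorded in $\Win$ are genuinely winning, which \cref{thm:inv} already guarantees. I would simply invoke it.

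\emph{Completeness} is where the fixed-point characterisation of \cref{lem:fixpoint} does the work. When \Call{terminate}{} returns true, both waiting lists are empty. First I would use Invariants~1 and~\ref{inv:fp_expl} to argue that $\Explored$ is closed under $\Rightarrow^\trans$ and contains $\TempSucc{\SymbState_0}$; by the fixed-point property of $Reach(\SymbState_0)$ stated just before \cref{lem:fixpoint}, this forces $Reach(\SymbState_0)$ to be covered by $\Explored$, so no reachable concrete state is missed. Next, since $\WaitingUpdate = \emptyset$, Invariant~\ref{inv:fp_upd} collapses to its second disjunct for \emph{every} explored symbolic state simultaneously, i.e.\ $\SafePred(\Win \cup \WinningMoves(\Win), \Uncontrollable(\Win)) \cap (\{\SymbState\}\times\SymbState) \subseteq \Win$ for all $\SymbState \in \Explored$. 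Taking the union over $\SymbState \in \Explored$ and using that $\Explored$ covers the reachable state space, this says $\Win$ is a set containing $\TargetSet$ (Invariant~3) and closed under the operator of \cref{lem:fixpoint}. Since $\Winning(\TargetSet)$ is the \emph{smallest} such set, I would conclude $\Winning \subseteq \Win$ on the relevant domain, and combined with soundness obtain $\Win = \Winning$ over the explored states.

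The main obstacle, and the step I would spend the most care on, is bridging the per-symbolic-state formulation of the invariants with the global fixed-point of \cref{lem:fixpoint}: the invariants quantify over individual $\SymbState \in \Explored$ and intersect with $\{\SymbState\}\times\SymbState$, whereas \cref{lem:fixpoint} is phrased on flat sets of concrete states. I would need to verify that the symbolic operators ($\Rightarrow^\trans_c$, $\Rightarrow^\trans_u$, and the local $\SafePred$ computed in \Call{update}{}) faithfully realise the concrete operators $\WinningMoves$, $\Uncontrollable$ and $\SafePred$ when lifted to the pair-space $\StateSpace \times \StateSpace$, and in particular that restricting to explored symbolic states does not lose any predecessor needed by the fixed-point operator --- this is precisely where the coverage of $Reach(\SymbState_0)$ by $\Explored$ is essential, since a winning state reachable via a transition from an unexplored region could otherwise be overlooked. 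Once this correspondence is established, the smallest-fixed-point argument is routine.
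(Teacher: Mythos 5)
Your proposal is correct and follows essentially the same route as the paper's own proof: soundness via Invariant~\ref{inv:correctness}, completeness by combining empty waiting lists with Invariants~1, \ref{inv:fp_expl}, 3 and \ref{inv:fp_upd} to show that $\Explored$ covers $Reach(\SymbState_0)$ and that $\Win$ is closed under the fixed-point operator, then invoking the minimality in \cref{lem:fixpoint} on the game restricted to the reachable states, and finally projecting onto the parameters. The ``main obstacle'' you flag --- relating the per-symbolic-state invariants to the flat fixed point and justifying the restriction to $Reach(\SymbState_0)$ --- is exactly the step the paper handles by noting that temporal transitions stay within a symbolic state and that all runs, histories and strategies live inside $Reach(\SymbState_0)$.
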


\begin{proof}
  See \cref{ProofInv}.
\end{proof}

\begin{example}
For the coffee machine, the PZG is only finite after applying inclusion subsumption
(\cref{sec:optimizations}). However, even on this finite PZG, \cref{alg:main} does not terminate,
but keeps reporting solutions at \cref{line:report}.
In fact, it produces increasingly more general solutions, including 
$n\, \styleparam{p_2} \geq  \styleparam{p_1}$ (for any $n>0$).
If we bound these parameters in the initial specification, for intance 
$\styleparam{p_2}\geq 1 \wedge \styleparam{p_1}\leq 5$, our
algorithm synthesizes the extra constraints 
$\styleparam{p_1}+\styleparam{p_2}\leq\styleparam{p_4} \wedge \styleparam{p_3}<\styleparam{p_4}$, as expected.
\end{example}

\begin{theorem}
  Provided the waiting lists are treated fairly, any explored winning state is discovered as winning by the algorithm eventually.
\end{theorem}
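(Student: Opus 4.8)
The plan is to lean on the least-fixed-point characterisation of \cref{lem:fixpoint}. Writing $F(\StateSet)=\TargetSet\cup\SafePred(\StateSet\cup\WinningMoves(\StateSet),\Uncontrollable(\StateSet))$, the operator $F$ is monotone, so $\Winning(\TargetSet)$ is the union of the increasing chain of approximants $\Winning_0\subseteq\Winning_1\subseteq\cdots$ with $\Winning_{n+1}=F(\Winning_n)$. Each winning (symbolic-state, concrete-state) pair then carries a \emph{rank}: the least index at which it enters the chain. The statement I would actually prove, by induction on this rank, is the region-wise strengthening \emph{for every explored $\SymbState$, the whole rank-$\le n$ winning region of $\SymbState$ is eventually contained in $\mathit{Win}[\SymbState]$}; the theorem is then the pointwise consequence, since every explored winning state has some rank.

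The base case ($n=0$, the target region) is immediate from the invariant of \cref{thm:inv} that an explored target symbolic state $\SymbState$ satisfies $\{\SymbState\}\times\SymbState\subseteq\Win$. For the inductive step I would combine three ingredients. \emph{Reachability}: by the exploration invariant of \cref{thm:inv}, every PZG-successor of an explored $\SymbState$ lies in $\WaitingExplore\cup\Explored$, and fairness makes each of them (finitely many, since $\TransSet$ is finite) eventually explored. \emph{Monotone propagation}: the rank-$(n{+}1)$ region of $\SymbState$ is exactly one application of the $\SafePred$ computation in the \textsc{update} procedure to the rank-$\le n$ regions of those successors; by the induction hypothesis these regions eventually sit inside the corresponding $\mathit{Win}[\SymbState']$, and since $\WinningMoves$ is increasing, $\Uncontrollable$ decreasing, and $\SafePred$ increasing in its first and decreasing in its second argument, enlarging the neighbour information can only enlarge $\NewWin$. \emph{Fair re-update}: whenever a successor's $\mathit{Win}[\SymbState']$ grows, $\SymbState\in\Depends[\SymbState']$ is placed on $\WaitingUpdate$, so fairness forces a further update of $\SymbState$ after the last of its finitely many successors has reached its rank-$\le n$ region, and at that update $\NewWin$ covers the rank-$(n{+}1)$ region and is merged into $\mathit{Win}[\SymbState]$.

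The hard part is the $\Uncontrollable$ term, and it is what forces the region-wise rather than pointwise hypothesis: discovering even a single rank-$(n{+}1)$ state $\state$ needs a safe temporal move whose coverage avoids $\Uncontrollable(\Winning_n)$, and that avoidance is a constraint on an entire region of rank-$\le n$ winning states in the uncontrollable successors (the images of $\Cover(\state\temptrans\state')$), not on finitely many points. Two things must then be checked. First, that the algorithm's per-symbolic-state danger set $Pred(t_u,\SymbState'\setminus\mathit{Win}[\SymbState'])$ coincides, once $\mathit{Win}[\SymbState']$ has caught up to the rank-$\le n$ region, with $\Uncontrollable(\Winning_n)$ restricted to $\SymbState$; this follows from $\Win\subseteq\Winning$ (\cref{thm:inv}) together with the monotonicity of $\Uncontrollable$. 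Second, and this is the genuine obstacle, I must ensure the induction really stays at \emph{finite} ranks, i.e.\ that $\Winning=\bigcup_{n<\omega}\Winning_n$ so that "eventually" means "after finitely many fair steps." This is an $\omega$-continuity property of $F$. I expect it to hold precisely because the discrete branching is finite ($\TransSet$ finite) and each discrete transition has a unique successor state, which makes $Pred(t_u,\cdot)$ commute with intersections of decreasing chains; combined with compactness of the coverage segment to handle the $\SafePred$ limit, this should give $F(\bigcup_n\Winning_n)\subseteq\bigcup_n F(\Winning_n)$, hence finite ranks and finite-union-of-zones approximants per symbolic state. Establishing this continuity cleanly, and discharging the coverage/compactness argument for $\SafePred$, is where I would spend most of the effort.
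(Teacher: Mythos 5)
Your overall scaffolding is sound and in fact mirrors the paper's: induct on a well-founded measure of ``how close to the target'' a winning state is, use invariant \ref{inv:fp_expl} of \cref{thm:inv} plus fairness to get all relevant successors explored, and invariant \ref{inv:fp_upd} plus fairness to force the re-updates that propagate winning regions backwards; your monotonicity bookkeeping is also directionally correct (once the neighbours' $\mathit{Win}$ entries contain their rank-$\le n$ regions, the computed danger set is \emph{contained in} --- not equal to --- $\Uncontrollable(\Winning_n)$ restricted to $\SymbState$, which only enlarges $\NewWin$, so the inclusion you need survives). The gap is exactly where you locate it, and it cannot be deferred: your induction measure is the rank in the chain $\Winning_{n+1}=F(\Winning_n)$, and the entire argument is vacuous unless every winning state has \emph{finite} rank, i.e.\ unless the least fixed point of $F$ is attained at stage $\omega$. \cref{lem:fixpoint} gives you leastness, not $\omega$-continuity, and your proposed route to continuity --- finite branching for the $Pred(t_u,\cdot)$ part plus ``compactness of the coverage segment'' for the $\SafePred$ part --- breaks at the second step. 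The sets $B_n=\Uncontrollable(\Winning_n)$ are built from zones with strict inequalities, complements and unions, so they need not be closed; and a decreasing chain of non-closed sets can meet a compact segment at every finite stage while having empty intersection with it in the limit (on the line, $B_n=(0,1/n)$ against $K=[0,1]$). So compactness does not let you pass from $\Cover(\state\temptrans \state')\cap\bigcap_n B_n=\emptyset$ to $\Cover(\state\temptrans \state')\cap B_m=\emptyset$ for some finite $m$, and the $\omega$-continuity of $F$ remains unestablished.

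The paper closes this hole by a different, more concrete finiteness argument, and you will need it (or something equivalent): since neither delays nor discrete transitions change parameters, the PTG decomposes per parameter valuation $\val_\ParamSet$ into an ordinary TG; the region graph of that TG is finite, so the induced turn-based reachability game is finite, and --- viewing it as a min-cost reachability game with weight $1$ on discrete edges and weight $0$ on delays, as in \cite{DBLP:conf/concur/BrihayeGHM15} --- every winning state has a finite \emph{discrete distance to target}. This per-valuation finite measure is what the paper inducts on, and it also repairs your approach: it shows that each per-valuation fixed point is reached at a finite stage, hence every winning state of the PTG has finite rank and $\Winning=\bigcup_{n<\omega}\Winning_n$. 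But the justification is the region-graph reduction, not an abstract continuity property of $F$. As written, your proof is incomplete at its crux, and the repair you sketch for that crux would fail.
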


\begin{proof}[sketch]
For a classical TG, we can represent the underlying TA as a finite classical automaton
(\eg{} the region graph). On this automaton, we can define a (finite) turn-based reachability
game equivalent to the initial TG. Hence, we can use the notion of \emph{discrete distance to target} in a reachability game, corresponding to the smallest number of discrete transitions in which a controller can ensure to reach a target. This is equivalent to solving a \emph{Min-Cost Reachability game} as studied in \cite{DBLP:conf/concur/BrihayeGHM15} where delay transitions have weight $0$ and discrete transitions have weight $1$. The game graph is finite and the weights non-negative, so the discrete distance to target of a winning state is positive and finite.

While the same construction is not necessarily finite in a PTG, any state of a PTG is a state $(\loc, (\val_\ParamSet, \val_\ClockSet))$ of the TG, where all parametric linear terms in guards have been replaced by their valuation through $\val_\ParamSet$. Therefore, this result extends to winning states of a PTG.

Let $s$ be an explored winning state of the PTG and $n$ its distance to target. We only need to explore states reachable in $n$ discrete transitions from $s$. By invariant \ref{inv:fp_expl} from \cref{thm:inv}, when all states reachable in $k$ discrete transitions are explored, all states reachable in $k+1$ discrete transitions are either already explored or in the exploration waiting list. Assuming fairness of the waiting lists, at some time they have all been explored. Therefore, at some time, all states reachable from $s$ in $n$ discrete steps have been explored.

When all states reachable in $n$ discrete steps have been explored, all target states within are discovered. Those are states with distance to target $0$. For $0\leq k<n$, when all winning states reachable in less than $n-k$ discrete transitions from $s$ and with a distance to target less than $k$ are discovered, then all winning states reachable in less than $n-(k+1)$ discrete transitions from $s$ and with a distance to target less than $k+1$ are discoverable by update. Using the invariant (\ref{inv:fp_upd}) of \cref{thm:inv}, those states are either already discovered as winning or they are in the update waiting list. Assuming fairness of the waiting lists, at some time  they have all been discovered winning. Applying this recurrence until $k=n$, we get that there is a time where $s$ is discovered winning.
\end{proof}

We can guarantee: (1) All winning parameter valuations reported in \cref{line:report} are
correct, since the algorithm satisfies the invariants of \cref{thm:inv}.
(2) Every winning parameter valuation will eventually be reported, provided the 
waiting lists are treated fairly.
Hence, \cref{alg:main} is ``sound and complete in the limit'' \cite{DBLP:conf/tacas/AndreAPP21}.

\section{Optimizations}
\label{sec:optimizations}
%!TEX root=./main.tex

\begin{algorithm}[t]
  \caption{Adding optimizations to the explore procedure}
  \label{alg:explore_opt}
  \begin{algorithmic}[1]
    \Procedure{explore}{}
    \State $\SymbState \gets extract(\WaitingExplore)$
    \BoxedIf[draw=ColorCumuPrune, fill=ColorCumuPrune!20!white]{ 
      $\ProjectParam{\SymbState} \subseteq \WinningParam$} \label{cumulativestart}
      \Comment{Cumulative Pruning}\SetBoxEast
      \State \Return \label{cumulativeend}\EndBox
      \EndIf
    \If{$\SymbState.\loc \in \TargetSet$}
    \State $\mathit{Win}[\SymbState] \gets \SymbState$
    \State $\WaitingUpdate \gets \WaitingUpdate \cup \Depends[\SymbState]$
    \EndIf
    
      \BoxedIf[draw=ColorLosingProp, fill=ColorLosingProp!20!white]{$\ControllableDeadlock(\SymbState) \; \land \; \mathit{Win}[\SymbState] \neq
      \SymbState $}\label{deadlockstart}
      \Comment{Losing state propagation}\SetBoxEast
      \State $\WaitingUpdate_L \gets \WaitingUpdate_L \cup \Depends[\SymbState]$\label{deadlockend}\EndBox
      \EndIf

      \BoxedIf[draw=ColorCovPrune, fill=ColorCovPrune!20!white]{$\mathit{Win}[\SymbState] = \SymbState\; \lor \; \ControllableDeadlock(\SymbState)$}
      \label{covprunestart}
      \Comment{Coverage Pruning}\SetBoxEast
      \State \Return \label{covpruneend}\EndBox
      \EndIf
    \For{\trans transition from \SymbState :}
    \State $\SymbState' := \TempSucc{Succ(\trans, \SymbState)}$
      \BoxedIf[draw=ColorInc, fill=ColorInc!20!white]{$\exists \SymbState'' \in \Explored: \SymbState'\subseteq \SymbState''$}\label{inclusioncheckstart}
      \Comment{Inclusion check}\SetBoxEast
      \State $\Depends[\SymbState''] \gets \Depends[\SymbState''] \cup \{\SymbState\}$
    \Else
    \State $\Depends[\SymbState'] \gets \Depends[\SymbState'] \cup \{\SymbState\}$
    \State $\WaitingExplore \gets \WaitingExplore \cup \{\SymbState'\}$\label{inclusioncheckend}\EndBox
    \EndIf
    \EndFor
    \State $\WaitingUpdate_W \gets \WaitingUpdate_W \cup \{\SymbState\}$\label{winningpartition}
    \BoxedState[draw=ColorLosingProp, fill=ColorLosingProp!20!white] $\WaitingUpdate_L \gets \WaitingUpdate_L \cup \{\SymbState\}$\label{deadlockpart}
      \Comment{Losing state propagation}\EndBox
    \State $\Explored \gets \Explored \cup \{\SymbState\}$
    \EndProcedure
  \end{algorithmic}
\end{algorithm}

We present four optimizations to the algorithm presented in Section~\ref{sec:algo}.
All of them adapt optimizations from previous works, three of them (coverage pruning,
inclusion checking and losing state propagation) from Cassez et al.~\cite{OTF-TG}
and one of them (cumulative pruning) from Andr{\'{e}} et al.~\cite{DBLP:conf/tacas/AndreAPP21}.
We start by updating the exploration procedure to include the optimizations, as shown
in~\cref{alg:explore_opt}.
\subsection{Pruning}
First, we present some pruning techniques, as these only require slight modifications in the exploration procedure. To this end, we introduce the notion of a \textit{controller deadlock} state. A state is a \textit{controller deadlock} state if it has no controllable transitions. We define it as the following predicate on symbolic states:
\[\ControllableDeadlock(\SymbState) = \forall \trans, \SymbState': \mbox{ if }\SymbState \Rightarrow^{t} \SymbState' \mbox{ then } t\in \TransSet_u\]
Now, we introduce the two kinds of pruning:
\begin{itemize}
  \item \textbf{Cumulative Pruning:} If the projected parameters of a zone in a new
  symbolic state are included in the current set of winning parameters, we can safely prune the successors of this state. Indeed, if the only possible parameters in the zone already are determined to be winning, no new winning parameter can be found by exploring the successors of this state. This check can be seen in \cref{cumulativestart,cumulativeend} of \cref{alg:explore_opt}.
  \item \textbf{Coverage Pruning:} If a symbolic state is either winning or a controllable deadlock
  state, its successors can safely be pruned. Indeed, if the symbolic state is winning,
  we gain nothing from exploring further. Dually, a controller deadlock state can never become
  winning, since the controller has no action to do. This check can be seen in \cref{covprunestart,covpruneend}
  of \cref{alg:explore_opt}.
\end{itemize}

\subsection{Inclusion checking}
Originally, checking if a symbolic state $\SymbState'$ has been explored already is
done by checking if $\SymbState'\in \Explored$. The optimization by inclusion checking
instead checks if $\exists \SymbState'' \in \Explored : \SymbState'\subseteq \SymbState''$.
If this is the case, the newly discovered symbolic state can safely be discarded since
its superset has already been explored. Of course, the new dependency that $\SymbState$
depends on $\SymbState''$ still must be added. This optimization is done in the
exploration procedure (\crefrange{inclusioncheckstart}{inclusioncheckend} of \cref{alg:explore_opt}).
\subsection{Losing state propagation}
Losing state propagation is inspired by Cassez et al. \cite{OTF-TG} for TG. The idea is that instead of only discovering and propagating winning states, we will now also do the same for losing states, starting from controller deadlock states. A map $Lose$ will maintain the currently known losing states for a given symbolic state. Thus, each symbolic state $\SymbState$ can now be partitioned into three:
\begin{itemize}
  \item Winning: $Win[\SymbState]$,
  \item Losing: $Lose[\SymbState]$
  \item Unknown: $\SymbState \setminus (Win[\SymbState]\cup Lose[\SymbState])$.
\end{itemize}
To initially mark a state as losing, we use the controller deadlock predicate again while also making sure that the state is not winning, as shown
in \cref{alg:explore_opt}, \cref{deadlockstart,deadlockend}.
On \cref{winningpartition,deadlockpart}, we partition the $\WaitingUpdate$ list into
two lists for propagating winning and losing states respectively.

While pruning and inclusion checking only required the modification of the exploration procedure, the propagation of losing states influences all of the procedures of the original algorithm. We go through them now.

\paragraph{Update procedure.}
We create a new procedure for updating losing states, which can be seen in \cref{alg:update_opt}.
As the dual of the original update procedure, it is almost identical. Instead of $\Uncontrollable$,
we compute $\Controllable$, \ie{} the union of zones where the controller can lead
to a non-losing state. Similarly, instead of $\WinningMoves$, we compute $\LosingMoves$ which is the set of states where the environment can lead to a losing state. We then compute $\NewLosing$ which is the set of states where the environment can lead us to a losing state while avoiding states where \emph{only} controllable transitions are enabled ($\Controllable
\setminus \LosingMoves$). Finally, we update $Lose[\SymbState]$ and $\WaitingUpdate_L$ accordingly.

\begin{algorithm}[t]
  \caption{Adding new update procedure for losing state propagation}
  \label{alg:update_opt}
  \begin{algorithmic}
    \Procedure{update\_l}{}
    \State $\SymbState \gets extract(\WaitingUpdate_L)$
    \State $\Controllable \gets \hspace{-1em}\underset{\{(\SymbState', t) \mid  \SymbState \Rightarrow^{t}_c\SymbState'\}}{\bigcup}\hspace{-1em} Pred (t, \SymbState' \setminus Lose[\SymbState'])$
    \State $\LosingMoves \gets \hspace{-1em}\underset{\{(\SymbState', t) \mid  \SymbState \Rightarrow^{t}_u\SymbState'\}}{\bigcup}\hspace{-1em} Pred (t, Lose[\SymbState'])$
    \State $\NewLosing := \SafePred(Lose[\SymbState] \cup \LosingMoves, \: \Controllable \setminus \LosingMoves) \:  \cap \: \SymbState$
    \If{$Lose[\SymbState] \subsetneq \NewLosing$}
    \State $\WaitingUpdate_L \gets \WaitingUpdate_L \cup \Depends[\SymbState]$; $Lose[\SymbState] \gets \NewLosing$
    \EndIf
    \EndProcedure
  \end{algorithmic}
\end{algorithm}

\paragraph{Terminate function.}
The terminate function is modified to allow for early termination if all possible
information is already known, \ie{} $\TempSucc{\SymbState_0} \setminus (Win[\TempSucc
{\SymbState_0}]\cup Lose[\TempSucc{\SymbState_0}]) = \emptyset$. Indeed, if for all
valuations we have determined that we either win or lose, the algorithm can safely
terminate. This is shown in \cref{alg:terminate_opt}.

\begin{algorithm}[t]
  \caption{New \textsc{Terminate} with early termination if initial zone is covered}
  \label{alg:terminate_opt}
  \begin{algorithmic}
    \Function{terminate}{}
    \State $isEmpty \gets \WaitingExplore = \emptyset \land \WaitingUpdate = \emptyset$
    \State $initialZoneCovered \gets (\TempSucc{\SymbState_0}) \subseteq (Win[\TempSucc{\SymbState_0}] \cup Lose[\TempSucc{\SymbState_0}])$
    \State \Return $isEmpty \; \lor \; initialZoneCovered$
    \EndFunction
  \end{algorithmic}
\end{algorithm}

The final algorithm is then modified to include the new procedures and data structures introduced. As a result, in the main loop we now have to choose between three waiting lists instead of two: $\WaitingExplore$, $\WaitingUpdate_W$ and $\WaitingUpdate_L$.

\section{Implementation and Experimental Evaluation}
\label{sec:expe}
%!TEX root=./main.tex

To evaluate the termination behavior and efficiency of the semi-algorithm and the optimizations,
we implement them in the \imitator toolset and measure the performance on some realistic
case studies.

\subsection{Implementation}
We have implemented our proposed algorithm and optimizations
in the \imitator model checker \cite{DBLP:conf/cav/Andre21}, which features a wide repertoire of synthesis algorithms for PTA. We have extended its input language to PTG and added our PTG parameter synthesis algorithm, including the optimizations described in \cref{sec:optimizations}. The source code (in OCaml) is available on github\footnote{\url{https://github.com/imitator-model-checker/imitator}, branch: develop}.

In \imitator, the user specifies a model consisting of parameters, clocks and a network
of parametric timed automata. The user can analyse the model using an
analysis or synthesis query. \imitator selects the corresponding algorithm to use, after which it outputs the result of the query.

Our extension enables the user to specify edges in a PTA as (un)controllable, effectively turning it into a PTG. Along with this we add a new property \texttt{Win} and a corresponding algorithm \texttt{AlgoPTG}. In order to synthesize parameters for a PTG one must use
\texttt{property := \#synth Win(state\_predicate)},
using a predicate to define which states are winning. Usually, this predicate is simply
\texttt{accepting}, meaning that any state in an accepting location of the PTG is winning.

In \cref{alg:main} we left the choice between exploration and back-propagation to
be non-deterministic. In the implementation we choose to prioritize back-propagation
over exploration whenever possible (\ie{} when $\WaitingUpdate$ is non-empty), otherwise we explore. 
This seems to yield the fastest results in practice. 

\subsection{Experiment Design}
We selected two large case studies, one PTA and one TG, and extended them to PTGs by adding (un)controllable actions, and clock parameters, respectively.
An artifact containing instructions to run all the experiments is available online~\cite{ARTIFACT}.

\paragraph{Production Cell.}
This case study \cite{ProdCellCaseStudy2} has two conveyor belts (1 / 2), a robot with two arms (A / B) and a press. Plates arrive at conveyor belt 1 and are taken to the press by robot arm A, where they are processed for some time. Robot arm B takes processed plates and removes them through conveyor belt 2.

We model systems with 1--4 plates in \imitator. In the goal location, every plate made it safely to conveyor belt 2. If two plates collide before they are picked up by arm A, the game is lost immediately.
We assume that the rotation speed of the robot arm, the speed of the conveyor belt and the time to press are known constants. The aim is to synthesize a parameter \texttt{MINWAIT}, the minimum time interval between two plates arriving at the conveyor belt. The maximum time interval between two plates is fixed by
an additional constant \texttt{MAXWAIT}.

Our PTG model is largely inspired by the TG model of Cassez et al.~\cite{OTF-TG}.
Besides adding parameters, we check for collisions between plates rather than defining a maximum waiting time frame.
For 2--4 plates, we create a winning and a losing configuration of the constants; for 1 plate a collision is not possible. The losing configurations are created by setting \texttt{MAXWAIT} too small, which will deadlock the system for any value of \texttt{MINWAIT}. 

The \imitator model for the 1-plate configuration can be seen in Appendix~\ref{app:prodcell}.

\paragraph{Bounded Retransmission Protocol.}
The BRP provides reliable communication over an unreliable channel. We create a PTG
from a PTA model of the BRP~\cite{DBLP:conf/tap/AndreMP21}, in turn based on a TA
model \cite{BRP}, by making message loss uncontrollable.

In the BRP, a sender sends message frames to a receiver, tagged with an alternating bit, through a lossy channel. The receiver acknowledges all frames. If the sender does not receive an acknowledgement in time, it retransmits the message at most $k$ times, after which the sender gives up. The goal location indicates the successful transmission of the message, or the abort by the sender. 

\paragraph{Experimental Setup.}
All experiments were run on a single core of a computer with an Intel Core i5-10400F CPU @ 2.90GHz with 16GB of RAM running Ubuntu 20.04.6 LTS. For each implementation (basic, inclusion checking, cumulative pruning, coverage pruning, losing state propagation) we run the experiments 5 times and report the average time and state space size. A timeout of 2 hours is used.

\subsection{Experimental Results}
\begin{table}
    \begin{center}
    \footnotesize
    \caption{\label{experiment-table}Experimental results for different optimizations: inclusion check (inc), cumulative pruning (cm), coverage pruning (cv), losing state propagation (lp). Running time in seconds (s) and number of symbolic states (size). Green indicates the best results.}

        \begin{tabular}{|c|c||c|c||c|c||c|c||c|c|}
            \hline
            \multicolumn{2}{|c||}{\multirow{2}{*}{}} & \multicolumn{2}{c||}{inc} & \multicolumn{2}{c||}{inc+cm} & \multicolumn{2}{c||}{inc+cm+cv} & \multicolumn{2}{c|}{inc+cm+cv+lp} \\
            \cline{3-10}
            \multicolumn{2}{|c||}{\multirow{2}{*}{}} & Time & Size & Time & Size & Time & Size & Time & Size\\
            \hline
            \rowcolor{lightgray} \multicolumn{10}{|c|}{} \\
            \rowcolor{lightgray}
                \multicolumn{1}{|c}{\centering
                    \parbox{0.4cm}{\multirow{-2}{*}{\rotatebox{90}{\scriptsize
                    \mbox{\hspace{-2.5ex}plates}}}}}
                & \multicolumn{9}{c|}{\multirow{-2}{*}{Production Cell}} \\
            \hline
            1 & Win & 0.06s & 86 & 0.06s & 86 & 0.06s & 86 & 0.08s & 86 \\
            \hline
            \multirow{2}{*}{2} & Win & 7.19s & 746 & 7.56s & 746 & \cellcolor{\ColorBestResult}6.60s &
            \cellcolor{\ColorBestResult}701 & 7.22s & 701 \\
            \cline{2-10} 
            & Lose & \cellcolor{\ColorBestResult}1.43s& \cellcolor{\ColorBestResult}439 & 1.44s & 439 & 2.03s & 517 & 2.17s & 517 \\
            \hline
            \multirow{2}{*}{3} & Win & 36.7s & 1900 & 37.3s & 1900 & \cellcolor{\ColorBestResult}24.0s & \cellcolor{\ColorBestResult}1539 & 34.2s & 1539 \\
            \cline{2-10} 
            & Lose & 13.4s & 1372 & 13.9s & 1372 & \cellcolor{\ColorBestResult}9.53s & \cellcolor{\ColorBestResult}1251 & 14.2s & 1251 \\
            \hline
            \multirow{2}{*}{4} & Win & 4903s & 10755 & 4750s & 10755 & \cellcolor{\ColorBestResult}2394s & \cellcolor{\ColorBestResult}9350 & 3522s & 9350 \\
            \cline{2-10} 
            & Lose & 34.8s & 2605 & 35.6s & 2605 & \cellcolor{\ColorBestResult}21.6s & \cellcolor{\ColorBestResult}2372 & 153s & 2372 \\
            \hline
            \rowcolor{lightgray} \multicolumn{10}{|c|}{Bounded Retransmission Protocol}\\
            \hline
            \multicolumn{2}{|c||}{} & 34.3s & 1042 & 32.2s & 1042 & \cellcolor{\ColorBestResult}7.1s & \cellcolor{\ColorBestResult}612 & 7.5s & 612 \\
            \hline
        \end{tabular}
    
    \end{center}
\end{table}
We present the results of the experiments in \cref{experiment-table}. 
We do not include the runs without optimizations as they all timed out. This indicates that inclusion checking is the most vital optimization and should always be enabled. 

Indicated in green cells are the best results for each row. We can clearly see that
coverage pruning has the biggest effect of all the optimizations in our experiments. Losing state propagation seems to not provide much benefit in these experiments, as the overhead overshadows any positive effect it might have had.

\section{Conclusion}
\label{sec:concl}
%!TEX root=./main.tex
We provide the first implementation of parameter synthesis for Parametric Timed Games
with reachability objectives,
based on an on-the-fly algorithm~\cite{OTF-TG,DBLP:conf/atva/JovanovicLR13}.
It appears that without additional pruning heuristics, the algorithm
cannot handle the case studies, Bounded Retransmission Protocol and Production Cell.
Inclusion subsumption is a minimal requirement to achieve any 
result.

Contrary to previous algorithms for PTA~\cite{DBLP:conf/tacas/AndreAPP21} and TG~\cite{OTF-TG}, the
parameter synthesis algorithm does not terminate, even if the parametric zone graph
is finite. But we found that in the limit all parameter values will be enumerated.

We added additional pruning techniques (coverage pruning and cumulative 
pruning) to further reduce the search space. These techniques generally increased
the speed. We also experimented with propagating losing states, but in our examples
the overhead of checking and propagating losing states was not compensated by any
pruning potential. Future work could study under which circumstances the propagation
of losing states could be beneficial, but also strengthen the detection of (partially)
losing states. Another venue for future work is to study other objectives, like
safety games or liveness conditions.

\paragraph{Acknowledgment.}
We thank Étienne André for
his help with integrating our algorithm in the \imitator tool set.

\bibliographystyle{plain}
\bibliography{bibliography.bib}

\newpage
\appendix

\section{Proof of the Fixed Point Property of \Winning}

\label{WinningProp}

\paragraph*{Notations.} In this part, for two histories $\hist = s_0 s_1 \ldots s_i$
and $\hist'= s_i s_{i+1} \ldots s_j$, we note $h \cdot h'$ the history $s_0 s_1 \ldots s_{i-1} s_{i+1} \ldots s_j$ if there exist $\delta_1,\delta_2 \geq 0$ such that $s_{i-1} \to^{\delta_1} s_i \to^{\delta_2} s_{i+1}$, and $s_0 s_1 \ldots s_i s_{i+1} \ldots s_j$ if not. 
We call $h$ a prefix of $h \cdot h'$ and $h'$ a suffix. Similar notations apply by replacing $h'$ with a run. We also define the cover of a history $h = \langle s_0, s_1, \ldots s_n \rangle$ to be:\\ $\Cover(h) = \underset{i\in \mathbb{N}}{\bigcup}\;  \Cover(\state_{2i} \temptrans \state_{2i+1})$.

The proof of \cref{lem:fixpoint} uses the following \cref{lem:propwinning,lem:winintersect}.

\begin{lemma}
\label{lem:propwinning}
  $\SafePred( \Winning(\TargetSet) \cup \WinningMoves(\Winning(\TargetSet)) , \Uncontrollable
  (\Winning(\TargetSet))) \subseteq \Winning(\TargetSet)\enspace.$
\end{lemma}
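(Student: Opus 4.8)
The plan is to prove the inclusion by exhibiting, for each state $s$ in the left-hand side, an explicit winning controller strategy, thereby placing $s$ in $\Winning(\TargetSet)$. Write $\Winning$ for $\Winning(\TargetSet)$. Let $s \in \SafePred(\Winning \cup \WinningMoves(\Winning),\, \Uncontrollable(\Winning))$. Unfolding the definition of $\SafePred$, there are a delay $\delta \ge 0$ and a state $s'$ with $s \to^{\delta} s'$ such that $s' \in \Winning \cup \WinningMoves(\Winning)$ and $\Cover(s \to^{\delta} s') \cap \Uncontrollable(\Winning) = \emptyset$. Since $\Cover(s \to^{\delta} s')$ contains every intermediate state $s''$ reached by a delay $\delta' \le \delta$ and also $s'$ itself, I would first record the key fact: for every such covered $s''$ and every $t_u \in \TransSet_u$ with $s'' \to^{t_u} w$, we have $w \in \Winning$ — otherwise $s''$ would lie in $\Uncontrollable(\Winning)$, contradicting the avoidance condition.

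Next I would build a controller strategy $\sigma_c$ from $s$ by splicing. During the \emph{delay phase}, on any history that is a prefix of $s \to^{\delta} s'$, $\sigma_c$ proposes the remaining delay toward $s'$. Upon reaching $s'$ without interruption, $\sigma_c$ branches: if $s' \in \WinningMoves(\Winning)$ it returns a controllable transition $s' \to^{t_c} w'$ with $w' \in \Winning$ (which exists by definition of $\WinningMoves$); if $s' \in \Winning$ it immediately defers to a fixed winning strategy from $s'$. If instead the environment interrupts the delay phase by an uncontrollable transition at a covered state $s''$, landing in some $w \in \Winning$, then $\sigma_c$ thereafter defers to a fixed winning strategy from $w$. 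The splicing uses the suffix/prefix composition $h \cdot h'$ from the appendix notation, applying each sub-strategy to the portion of the history produced after the branch point.

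I would then verify that every run adhering to $\sigma_c$ is winning, by analysing how the global strategy $\sigma_{(\sigma_c,\sigma_e)}$ resolves against an arbitrary environment strategy $\sigma_e$. By the alternation property (\ref{altern}), once the environment commits to a finite delay it must next propose an uncontrollable transition; combined with environment priority and the $\min$ rule, each adhering run falls into exactly one of three cases: (i) the environment interrupts strictly before $\delta$, forcing $s \to^{\delta'} s'' \to^{t_u} w$ with $w \in \Winning$ by the key fact; (ii) the run reaches $s'$ and the environment there takes an uncontrollable transition, which again lands in $\Winning$ since $s' \notin \Uncontrollable(\Winning)$; or (iii) the run reaches $s'$ and the controllable move to $w' \in \Winning$ (or the winning strategy of $s'$ itself) governs. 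In every case the run, after a finite prefix, continues as a run adhering to a winning strategy from a winning state, hence visits $\TargetSet$ and is winning. Therefore $\sigma_c$ is winning from $s$, so $s \in \Winning$.

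The main obstacle will be the precise definition of the spliced strategy $\sigma_c$ as a total function on all histories, together with the careful matching of the global-strategy resolution rules — environment priority, the $\min$ of proposed delays, and the alternation constraint (\ref{altern}) — in the boundary situations, in particular when the controller proposes a delay but is cut short by a smaller environment delay, and when both players propose a discrete transition at $s'$. These cases are conceptually disarmed by the avoidance condition, which guarantees that \emph{every} uncontrollable interruption lands in $\Winning$; the delicate part is making the bookkeeping rigorous, so that the run actually produced genuinely equals the splice of the sub-runs to which the inductively assumed winning guarantees apply.
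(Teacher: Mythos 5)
Your proposal is correct and follows essentially the same route as the paper's own proof: unfold $\SafePred$ to obtain the delay $\delta$ and state $s'$, observe that the avoidance of $\Uncontrollable(\Winning(\TargetSet))$ forces every uncontrollable interruption during the delay (including at $s'$ itself) to land in a winning state, build the controller strategy that waits $\delta$ and then takes the controllable move when $s'\in\WinningMoves(\Winning(\TargetSet))$, and splice fixed winning strategies onto each history ending in a winning state. The bookkeeping you flag as the main obstacle (totalizing the spliced strategy, merging consecutive delays under the $h \cdot h'$ composition) is exactly what the paper resolves with its explicit delay-merging modification, so your plan matches the published argument in both structure and substance.
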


\begin{proof}
    Let $s \in \SafePred( \Winning(\TargetSet) \cup \WinningMoves(\Winning(\TargetSet)) , \Uncontrollable(\Winning(\TargetSet)))$.
    
    There exists $s' \in \Winning(\TargetSet) \cup \WinningMoves(\Winning(\TargetSet))$ and $\delta \geq 0$ such that $s \temptrans s'$ and $\Cover(s \temptrans s') \cap \Uncontrollable(\Winning(\TargetSet)) = \emptyset$.

    We define a controller strategy $\strat_c$ from $s$ such that $\strat_c(s) =
    \delta$. If $s'$ is not winning, it is in $\WinningMoves(\Winning)$ and thus have
    a controllable discrete transition $\trans_c$ leading to a state $s''$ such that
    $s''$
    is in $\Winning(\TargetSet)$. In this case, we set $\strat_c(s,s')=\trans_c$.

    Now, whatever the environment strategy $\strat_e$, the possible first moves of
    the run are limited to $s \to^{\delta} s'$ with $s'$ winning, $s \to^{\delta}
    s' \to^{t_c} s''$ with $s''$ winning or $s \to^{\delta_i} s_i \to^{t_u} s_i'$
    with $s_i \in \Cover(s \to^{\delta} s')$ and $t_u \in T_u$. Since $s_i\in \Cover
    (s \temptrans s')$, $s_i \notin \Uncontrollable(\Winning(\TargetSet))$ and so
    $s_i'$
    is winning.

    By definition, for every winning state $s_w$, there exists a controller strategy
    $\strat_c^{s_w}$ that is winning from $s_w$.
    We modify the strategy $\strat_c$ so that for all the histories $h_{\leq s_w}$
    described above ending in a winning state $s_w$, the controller continues the run with the strategy $\strat^{s_w}_{c}$.
    
    For every history $\hist$ starting in $s_w$, we modify strategy $\strat_c$ so
    that $\strat_c(h_{\leq s_w} \cdot \hist) = \strat^{s_w}_c(\hist)$.
    If $\strat_c^{s_w}(\langle s_w \rangle ) = \delta_2 \geq 0$ and $\hist_{\leq s_w} = h_{< s_w} \cdot \langle s_w', s_w \rangle$ with  $s_w'$ and $s_w$ the last two states of $h_{\leq s_w} $ such that there exists $\delta_1 \geq 0$ with $s_w' \to^{\delta_1} s_w$, we modify $\strat_c(\hist_{<s_w})$ to return $\delta_1+\delta_2$.

    Let $\strat_e$ be an environment strategy, $r$ be the run resulting from $\strat_e$ and $\strat_c$ and $h_{\leq s_w}$ the prefix of $r$ ending in a winning state described above.
    We call $\strat_e^{s_w}$ the environment strategy from $s_w$ such that $\strat_e^{s_w}(h) = \strat_e (h_{\leq s_w} \cdot h)$. 
     
    We have:
    $r = \hist_{\leq s_w} \cdot r_{\geq s_w}$ where $r_{\geq s_w}$ is the run resulting from $\strat^{s_w}_{c}$ and $\strat_e^{s_w}$. Since $\strat^{s_w}_{c}$ is winning, so is $r_{\geq s_w}$. Its cover contains a state with a target location and is contained in the cover of $r$. Therefore $r$ is winning and so is $\strat_c$.
    Finally, $s$ is winning.

\end{proof}
  
\begin{lemma}
\label{lem:winintersect}
  Let $\StateSet_1, \StateSet_2$ such that for $i \in \{1,2\}$, \\$\SafePred
  ( \StateSet_i \cup \WinningMoves(\StateSet_i) , \Uncontrollable(\StateSet_i))
  \subseteq \StateSet_i\enspace.$

  Then $\SafePred( (\StateSet_1 \cap \StateSet_2) \cup \WinningMoves(\StateSet_1
  \cap \StateSet_2) , \Uncontrollable(\StateSet_1 \cap \StateSet_2)) \subseteq
  (\StateSet_1 \cap \StateSet_2)\enspace.$
\end{lemma}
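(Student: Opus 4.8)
The plan is to recognize that the left-hand side of each hypothesis is a single operator applied to a set, and then to exploit monotonicity together with the universal property of intersection. Concretely, I would define the operator $F(\StateSet) = \SafePred(\StateSet \cup \WinningMoves(\StateSet), \Uncontrollable(\StateSet))$ on sets of states, so that each hypothesis reads $F(\StateSet_i) \subseteq \StateSet_i$ and the goal reads $F(\StateSet_1 \cap \StateSet_2) \subseteq \StateSet_1 \cap \StateSet_2$. Phrasing the statement this way reduces it to a purely order-theoretic fact.

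First I would prove that $F$ is increasing with respect to set inclusion, which is where all the bookkeeping lives. Suppose $\StateSet \subseteq \StateSet'$. The first argument $\StateSet \cup \WinningMoves(\StateSet)$ grows, since $\WinningMoves$ is increasing and union is monotone, and $\SafePred$ is increasing in its first argument. The second argument shrinks, because $\Uncontrollable$ is decreasing, so $\Uncontrollable(\StateSet') \subseteq \Uncontrollable(\StateSet)$, and $\SafePred$ is decreasing in its second argument, so a smaller second argument again enlarges $\SafePred$. Combining the two effects yields $F(\StateSet) \subseteq F(\StateSet')$.

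With monotonicity established, the conclusion is immediate. Since $\StateSet_1 \cap \StateSet_2 \subseteq \StateSet_i$ for each $i \in \{1,2\}$, monotonicity gives $F(\StateSet_1 \cap \StateSet_2) \subseteq F(\StateSet_i)$, and the hypothesis gives $F(\StateSet_i) \subseteq \StateSet_i$. Hence $F(\StateSet_1 \cap \StateSet_2) \subseteq \StateSet_1$ and $F(\StateSet_1 \cap \StateSet_2) \subseteq \StateSet_2$ simultaneously, so $F(\StateSet_1 \cap \StateSet_2) \subseteq \StateSet_1 \cap \StateSet_2$, which is exactly the claim.

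The main obstacle is conceptual rather than computational: one must correctly track the two monotonicity reversals. The operator $F$ superficially looks anti-monotone in its second slot, because $\SafePred$ is decreasing there; but since $\Uncontrollable$ is itself decreasing in $\StateSet$, the two reversals cancel and $F$ ends up increasing overall. Once this cancellation is verified, no reasoning about runs, covers, or strategies (as in \cref{lem:propwinning}) is required — the result follows purely from monotonicity.
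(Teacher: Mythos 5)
Your proposal is correct and follows essentially the same route as the paper: both arguments combine $\StateSet_1 \cap \StateSet_2 \subseteq \StateSet_i$ with the monotonicity of $\WinningMoves$ (increasing), $\Uncontrollable$ (decreasing), and $\SafePred$ (increasing in its first argument, decreasing in its second) to conclude $\SafePred\bigl((\StateSet_1 \cap \StateSet_2) \cup \WinningMoves(\StateSet_1 \cap \StateSet_2), \Uncontrollable(\StateSet_1 \cap \StateSet_2)\bigr) \subseteq \StateSet_i$ for each $i$. Packaging the composite map as a single monotone operator $F$ is a purely cosmetic tidying of the paper's chained inclusions.
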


\begin{proof}
  For $i \in \{ 1, 2\}$, $\StateSet_1 \cap \StateSet_2 \subseteq \StateSet_i$. By monotonicity of $\WinningMoves$ and $\Uncontrollable$, we have:
  $\WinningMoves(\StateSet_1 \cap \StateSet_2) \subseteq \WinningMoves(\StateSet_i)$ and $\Uncontrollable(\StateSet_1 \cap \StateSet_2) \supseteq \Uncontrollable(\StateSet_i)$. By monotonicity of $\SafePred$,
  $\SafePred( (\StateSet_1 \cap \StateSet_2) \cup \WinningMoves(\StateSet_1 \cap \StateSet_2) , \Uncontrollable(\StateSet_1 \cap \StateSet_2)) \subseteq $ $\SafePred( \StateSet_i \cup \WinningMoves(\StateSet_i) , \Uncontrollable(\StateSet_i)) \subseteq \StateSet_i$.
  
  Thus $\SafePred( (\StateSet_1 \cap \StateSet_2) \cup \WinningMoves(\StateSet_1 \cap \StateSet_2) , \Uncontrollable(\StateSet_1 \cap \StateSet_2)) \subseteq (\StateSet_1 \cap \StateSet_2) $.
\end{proof}

\lemfixpoint*

\begin{proof}
  
    Since the property to $\SafePred( \StateSet \cup \WinningMoves(\StateSet) , \Uncontrollable(\StateSet)) \subseteq \StateSet$ for a set of states $\StateSet$ is stable by intersection, there is a smallest set $\Winning_?$ containing the target states verifying this property.

    Let $s \notin  \SafePred( \Winning_? \cup \WinningMoves(\Winning_?) , \Uncontrollable
    (\Winning_?))$, such that $s.l\notin \TargetSet$ and let $\strat_c$ be a controller
    strategy from $s$.
    We build an environment strategy $\strat_e$ from $s$ defined recursively on the
    resulting history of the combined strategy $\strat=(\strat_c,\strat_e)$ such
    that the resulting run does not enter $\Winning_?$ and thus never reaches the
    target state.

    For $h_0 = \langle s \rangle$ where $\Cover(h_0) = \{s\}$, the location of $s$
    is not a target and $ls(h_0) = s \notin \Winning_?$.

    For $n \in \mathbb{N}$,
    Let $h_{2n} = \langle s_0, s_1, s_2 \ldots s_{2n} \rangle$ be a history such that
    there exists an environment strategy $\strat_e$ defined on the strict prefix of $h_{2n}$, $h_{2n}$ is the prefix up to rank $2n$ of the resulting run of $\strat=(\strat_c,\strat_e)$.
    $\Cover(h_{2n})$ does not contain a state with a target location, $s_{2n} \notin
    \Winning_?$. We will prove that either we can extend $\sigma_e$ on $h_{2n}$ such that the resulting run of $\strat=(\strat_c,\strat_e)$ is $s_0, s_1, s_2, \ldots s_{2n}$ and is losing, or we can extend the history to $h_{2n+2} = h_{2n} \cdot \langle s_{2n}, s_{2n+1}, s_{2n+2} \rangle$ such that 
    $\Cover(h_{2n})$ does not contain a state with a target location, $s_{2n} \notin \Winning_?$, and we can extend $\sigma_e$ on $h_{2n}$ and $h_{2n+1}$ such that $h_{2n+2}$ is the prefix up to rank $2n+2$ of the resulting run of $\strat=(\strat_c,\strat_e)$.

    If $\strat_c(h_{2n}) = \infty$, we set $\strat_e(h_{2n}) = \infty$ and the resulting
    run ends here. The run cover is $\Cover(h_{2n}) \cup \Cover(s_{2n})$. By recurrence
    hypothesis, $\forall s \in \Cover(h_{2n}), \; s.l\notin \TargetSet $. Since $s_
    {2n}$ is not winning, its location is not in $\TargetSet$ and $\forall s \in \Cover(s_{2n}), \; s.l\notin \TargetSet $ thus the run is losing.
    
    Else there is a delay $\delta \geq 0$ followed by a controllable transition $\trans_c$
    and two states $s', s''$ such that $s_{2n} \to^{\delta} s' \to^{t_c} s''$
    and if $\strat_c(h_{2n}) = t_c$ if $\delta=0$ or $\strat_c(h_{2n}) = \delta$ and
    $\strat_c(h_{2n}\cdot\langle s_{2n}, s'\rangle) = \trans_c$.
    Since $s_{2n} \notin  \SafePred( \Winning_? \cup \WinningMoves(\Winning_?) , \Uncontrollable
    (\Winning_?))$, there are two possibilities. 
    First, $s''\notin\Winning_?$. In that case, we set $\strat_e(h_{2n}) = \infty$
    and we let the transition happen. $s_{2n+1} = s'$, $s_{2n+2}= s''$, all states
    in $\Cover(s_{2n} \to^{\delta} s_{2n+1})$ are in location $s_{2n}.l \notin\TargetSet$
    and $s_{2n+2} \notin \Winning_?$. %
    Second possibility, there is a state $s_i \in \Cover{(s_{2n} \to^{\delta} s')}$,
    a delay $0 \leq \delta_i \leq \delta$, an uncontrollable transition $t_u$ and
    a state $s_i'$ such that $s_{2n} \to^{\delta_i} s_i \to^{t_u} s_i'$ and $s_i'\notin
    \Winning_?$. We set $\strat_e(h_{2n}) = \delta_i$ and $\strat_e(h_{2n}\cdot \langle
    s_{2n},s_i\rangle)= t_u$. Since $\strat_e(h_{2n}) = \delta_i \leq \delta = \strat_c
    (h_{2n})$, the next states of the run are $ s_{2n+1} = s_i$ and $ s_{2n+2} = s_i'$.
    All states in $\Cover (s_{2n} \to^{\delta} s_{2n+1})$ are in location $s_{2n}.l\notin
    \TargetSet$ and $s_{2n+2} \notin \Winning_?$.

    Thus, either the resulting run of $\strat_c$ and $\strat_e$ ends and is losing,
    or it is infinite and its cover is the union of the covers of its histories which does not contain states with a target location. Therefore, the run is losing and $s$ is not in $\Winning_?$.
    
    Finally, we get $\Winning_? \subseteq \Winning(\TargetSet)$ and since $\Winning_?$
    is the smallest set containing the target states where $\SafePred( \StateSet
    \cup \WinningMoves(\StateSet) , \Uncontrollable(\StateSet)) \subseteq
    \StateSet$ holds
    and since $\Winning(\TargetSet) \subseteq \Winning_?$, we obtain
    $\Winning(\TargetSet) = \Winning_?$.

\end{proof}

\section{Proof of the Correctness of the Algorithm}
\label{ProofInv}

Recall that $\Win = \underset{\SymbState \in \Explored}{\bigcup} \{\SymbState\} \times
\Winning[\SymbState]$.

\thminv*

To prove this theorem, we will need a few auxiliary invariants:

\begin{theorem}\label{thm:aux}
  The following invariants stay true throughout the execution of the algorithm :
  \begin{enumerate}
  \setcounter{enumi}{5}
    \item $\WaitingExplore \cup \Explored \subseteq Reach(\SymbState_0)$ 
    \item $\forall \SymbState, \; \Depends[\SymbState] = \{ \SymbState' \in \Explored
    \; | \; \SymbState' \Rightarrow^{\trans} \SymbState \}$
    \item $\WaitingUpdate \subseteq \Explored$
  \end{enumerate}
\end{theorem}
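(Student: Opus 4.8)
The plan is to establish invariants~6, 7 and~8 simultaneously by induction on the number of operations (\textsc{explore} or \textsc{update}) performed by \cref{alg:main}, treating each procedure call as a single atomic transition and checking that the invariants hold after the call completes. For the base case, right after initialization we have $\Explored = \WaitingUpdate = \emptyset$, $\WaitingExplore = \{\TempSucc{\SymbState_0}\}$, and $\Depends$ the empty map: invariant~6 holds since $\TempSucc{\SymbState_0} \in Reach(\SymbState_0)$, invariant~7 holds because both sides are empty for every $\SymbState$, and invariant~8 holds trivially. For the inductive step I would fix one iteration of the main loop and split into the case where \textsc{explore} runs and the case where \textsc{update} runs. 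The observation that simplifies the \textsc{update} case is that \textsc{update} modifies only $\WaitingUpdate$, $\mathit{Win}$ and $\WinningParam$, leaving $\WaitingExplore$, $\Explored$ and $\Depends$ untouched; hence invariants~6 and~7 are immediately preserved and only invariant~8 needs attention there.

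For invariant~6, the only operation touching $\WaitingExplore \cup \Explored$ is \textsc{explore}. When it extracts $\SymbState$ from $\WaitingExplore$, the induction hypothesis gives $\SymbState \in Reach(\SymbState_0)$; moving it from $\WaitingExplore$ into $\Explored$ keeps it in the union. The newly enqueued states are exactly the symbolic successors $\SymbState' = \TempSucc{Succ(\trans,\SymbState)}$, which lie in $Reach(\SymbState_0)$ by its fixed-point closure property $\TempSucc{Succ(\trans,\StateSet)} \subseteq \StateSet$ instantiated at $\StateSet = Reach(\SymbState_0)$, applied node-wise. Thus the updated union stays inside $Reach(\SymbState_0)$.

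For invariant~7, I would track precisely the two edits made by \textsc{explore}: it adds $\SymbState$ to $\Explored$, and for every transition $\trans$ from $\SymbState$ it inserts $\SymbState$ into $\Depends[\TempSucc{Succ(\trans,\SymbState)}]$. Fixing an arbitrary $\SymbState''$, the set $\{\SymbState' \in \Explored \mid \SymbState' \Rightarrow^{\trans} \SymbState''\}$ gains $\SymbState$ exactly when $\SymbState \Rightarrow^{\trans} \SymbState''$ holds for some $\trans$, which is precisely when $\Depends[\SymbState'']$ gains $\SymbState$; together with the induction hypothesis this keeps the two sides equal (the case $\SymbState'' = \SymbState$ arising from a self-loop is consistent, since $\SymbState$ joins both sides). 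Invariant~8 then follows by reading $\Depends$ after the \texttt{for} loop: in \textsc{explore} the sets added to $\WaitingUpdate$ are $\{\SymbState\}$ and possibly $\Depends[\SymbState]$, where $\SymbState$ simultaneously joins $\Explored$ and, by invariant~7 applied to the post-state, $\Depends[\SymbState] \subseteq \Explored$; in \textsc{update}, extraction only shrinks $\WaitingUpdate$, and the sole addition is again $\Depends[\SymbState] \subseteq \Explored$ with $\Explored$ unchanged.

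The argument is essentially careful bookkeeping rather than a deep one, so I expect the main obstacle to be keeping the dependency order straight. Invariant~8 is derived from invariant~7, so the simultaneous induction must record that $\Depends[\SymbState] \subseteq \Explored$ holds for the \emph{final} contents of $\Explored$ (including the just-inserted $\SymbState$), which forces the verification to be done on the post-state of each atomic procedure call rather than at intermediate lines where invariant~7 is momentarily broken. The one convention to fix explicitly is reading $Reach(\SymbState_0)$ as a set of PZG nodes and $\TempSucc{Succ(\trans,\cdot)}$ as the symbolic-successor operator, so that the $Reach$ closure property applies directly to the successors generated in \textsc{explore}.
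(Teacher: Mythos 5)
Your proof is correct and follows essentially the same route as the paper's: an induction over the atomic \textsc{explore}/\textsc{update} steps, using closure of $Reach(\SymbState_0)$ under symbolic successors for invariant~6, the exact bookkeeping of $\Depends$ during exploration for invariant~7, and deriving invariant~8 from invariant~7 plus the fact that the extracted state joins $\Explored$. Your version is just a more explicitly structured rendering (base case, case split on the two procedures, invariants checked on post-states) of the paper's informal ``added together, never removed'' argument.
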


\begin{proof} of \cref{thm:inv,thm:aux} 
  \begin{enumerate}
    \item The algorithm begins with the exploration of $ \TempSucc{\SymbState_0}$ at the end of which it is added to $\Explored$. At no point in the algorithm do we remove symbolic states from $\Explored$.
    \item A symbolic state $\SymbState$ is added to $\Explored$ at the end of its
    exploration. During the exploration, all symbolic states $\SymbState'$ such that
    $\SymbState \Rightarrow^{\trans} \SymbState'$ are observed by going through all transitions $\trans_s$ from $\SymbState$ and computing $ \TempSucc{Succ(\trans_s, \SymbState)}$ which corresponds to the symbolic state $\SymbState_s$ such that $\SymbState \Rightarrow^{\trans_s} \SymbState_s$. They are added to $\WaitingExplore$ if they are not already part of $\WaitingExplore \cup \Explored$. From there, we never remove a symbolic state from $\Explored$ and when we remove a symbolic state from $\WaitingExplore$, we do so at the end of its exploration and thus it is added to $\Explored$. Thus, nothing is ever removed from $\WaitingExplore \cup \Explored$.
    \item A symbolic state is added to $\Explored$ and removed from $\WaitingExplore$ at 
    the end of its exploration. After that, it is never removed from $\Explored$ and
    thus can't be added back in $\WaitingExplore$. It is thus explored only once during
    which we add $\SymbState$ to $\mathit{Win}[\SymbState]$
    if $\SymbState.l\in\TargetSet$.
    It cannot be updated beforehand, as $\WaitingUpdate \subseteq \Explored$ and afterwards
    $\mathit{Win}[\SymbState]$ can only be modified during an update. During an update,
    $\mathit{Win}[\SymbState]$ gets $\mathit{Win}[\SymbState] \cap \NewWin$ and thus
    can only grow. Thus, $\mathit{Win}[\SymbState]$ always contains $\SymbState$ if
    $\SymbState.l \in  \TargetSet$ after its exploration.
    \item For all symbolic states $\SymbState$, 
    before exploration we have $\mathit{Win}[\SymbState] = \emptyset \subseteq \Winning$.
    After exploration, but before the first update, if $\SymbState.l \in \TargetSet$,
    all state in $Symbstate$ are winning by having a target location and  $\mathit{Win}
    [\SymbState] = \SymbState \subseteq \Winning$.

    The operator $\SafePred(S_1, S_2)$ computes temporal predecessor of $S_1$ avoiding $S_2$. The temporal transition relation in the enhanced game arena is $\leadsto^{\delta_s} \; = \; (=_{\StateSet} \times \to^{\delta_s})$, which remains in the same symbolic state.
    Therefore:
    \begin{itemize}
\item    {\footnotesize $\SafePred( \Win \cup \WinningMoves(\Win) , \: \Uncontrollable(\Win) ) \cap (\{\SymbState\} \times \SymbState)$} {\scriptsize
    ${}=\SafePred( (\Win \cup \WinningMoves(\Win)) \cap (\{\SymbState\} \times \SymbState)  , \: \mathit{Uncontr}(\Win)\cap (\{\SymbState\} \times \SymbState))$}.
    \item $\WinningMoves(\Win) \cap (\{\SymbState\} \times \SymbState) = \underset{\trans_c \in \TransSet_c}{\bigcup} Pred(t_c, \Win) \cap (\{\SymbState\} \times \SymbState) = \underset{\trans_c \in \TransSet_c(\SymbState)}{\bigcup} Pred(t_c, \Win)$, where $\TransSet_c(\SymbState)$ are controllable actions from $\SymbState$.
    \item $\Uncontrollable(\Win) \cap (\{\SymbState\} \times \SymbState) = \underset{\trans_u \in \TransSet_u}{\bigcup} Pred(t_u, S \backslash \Win) \cap (\{\SymbState\} \times \SymbState) = \underset{\trans_u \in \TransSet_u(\SymbState)}{\bigcup} Pred(t_u, S \backslash \Win)$; $\TransSet_u(\SymbState)$ are uncontrollable actions from $\SymbState$.
    \end{itemize}    
    Thus, $\WinningMoves$ corresponds to $\WinningMoves(\Win) \cap (\{\SymbState\} \times \SymbState)$ 
    and $\Uncontrollable$ corresponds to $\Uncontrollable(\Win) \cap (\{\SymbState\} \times \SymbState)$,
    during an update of $\SymbState$.
    By monotony, since $\Win \subseteq \Winning$, we have
    $\WinningMoves(\Win)$ ${}\subseteq \WinningMoves(\Winning)$,
    $\Uncontrollable(\Winning) \subseteq \Uncontrollable(\Win)$.
    Thus:
    {\footnotesize
    \begin{align*}
    & \NewWin 
    \\ =\  &
    \SafePred( \Win \cup \WinningMoves(\Win) , \: \Uncontrollable(\Win) ) \cap (\{\SymbState\} \times \SymbState) \\ \subseteq\  &
    \SafePred( \Winning \cup \WinningMoves(\Winning), \: \Uncontrollable(\Winning) ) \cap (\{\SymbState\} \times \SymbState)
    \\ \subseteq\  &
    \SafePred( \Winning \cup \WinningMoves(\Winning) , \Uncontrollable(\Winning)) 
    \\ \subseteq\ & \Winning
    \end{align*}
    }
    We conclude that all new states added to $\mathit{Win}[\SymbState]$ are winning.
  \item
    When a Symbolic state is explored, it is added to $\WaitingUpdate$.
    
    During an update of $\SymbState$, we recall that 
    \begin{itemize}
      \item $\WinningMoves$ corresponds to $\WinningMoves(\Win) \cap (\{\SymbState\} \times \SymbState)$ and
      \item $Uncontrollable$ corresponds to $\Uncontrollable(\Win) \cap (\{\SymbState\} \times \SymbState)$ and
      \item $\SafePred(Win[\SymbState] \cup \WinningMoves, \Uncontrollable) \cap \SymbState$ corresponds to 
      {\footnotesize $\SafePred( \Win \cup \WinningMoves(\Win) , \: \Uncontrollable(\Win) ) \cap (\{\SymbState\} \times \SymbState)$}.
    \end{itemize}

    A symbolic state $\SymbState$ is removed from $\WaitingUpdate$ at the end of an update.
    Adding $\SafePred(Win[\SymbState] \cup \WinningMoves, \Uncontrollable) \cap \SymbState$
    to $\mathit{Win}[\SymbState]$, 
    corresponds to adding to $\Win$ the set\\
    {\footnotesize 
    $\SafePred( \Win \cup \WinningMoves(\Win) , \: \Uncontrollable(\Win) ) \cap (\{\SymbState\} \times \SymbState)$}. 
    
    We thus have 
    {\footnotesize
    $$ {{\SafePred( {\Win} \cup {\WinningMoves(\Win)} , \: \Uncontrollable(\Win))} \cap {\SymbState}}\subseteq {\Win}\enspace$$}

    Consequently, or simultaneously in the case of a self loop, the value of $\SafePred
    ( \Win \cup \WinningMoves(\Win) , \: \Uncontrollable(\Win) ) \cap \SymbState$
    may change only if the value of $\mathit{Win}[\SymbState']$ is modified where $\SymbState
    \Rightarrow^{\trans} \SymbState'$.
    In that case, $\Depends[\SymbState'] = \{ \SymbState'' \in \Explored \; | \; \SymbState'' \Rightarrow^{\trans} \SymbState' \}$ is added to $\WaitingUpdate$. Thus, $\SymbState$ is added back to $\WaitingUpdate$. 
\item The initial value of $ \WaitingExplore
\cup \Explored$ is $\{\TempSucc{\SymbState_0}\}$
which is included in $Reach(\SymbState_0)$. From there, a symbolic state $\SymbState'$
is added to $ \WaitingExplore \cup \Explored$ when it is reachable in one transition
from a symbolic state $\SymbState$ being explored. $\SymbState$ is thus currently in $\WaitingExplore \in \WaitingExplore \cup \Explored \subseteq Reach(\SymbState_0)$. Consequently, $\SymbState'$ is also a subset of $Reach(\SymbState_0)$.
\item $\forall \SymbState, \SymbState'$ symbolic states such that $\SymbState \Rightarrow_{\trans} \SymbState'$, if $\SymbState$ enters $\Explored$, it is at the end of its exploration during which it is added to $\Depends[\SymbState']$.
      From there, $\SymbState$ is never removed from either $\Explored$ or $\Depends[\SymbState']$.
\item A symbolic state enters $\WaitingUpdate$ either at the end of its exploration
when it is added to $\Explored$ or from a $\Depends[\SymbState']$. By the previous
point $\Depends[\SymbState'] \subseteq \Explored$.
  \end{enumerate}
\end{proof}

\algcorrect*

  \begin{proof}
    If the algorithm terminates, the waiting lists are empty. We then have : 
    \begin{itemize}
      \item $ \TempSucc{\SymbState_0}\in\Explored$.
      \item $\forall \SymbState \in \Explored, \forall \SymbState'$ such that $\SymbState
      \Rightarrow^{\trans} \SymbState': \SymbState' \in \Explored$.
    \end{itemize}
  
    We can then conclude that the union of the explored symbolic states covers $Reach(\SymbState_0)$.
    We also have, for all $\SymbState \in \Explored$:
    \begin{itemize}
      \item If $\SymbState.l \in \TargetSet$, then $\{\SymbState\} \times \SymbState \subseteq \Win $.
      \item $\Win \subseteq \Winning$.
      \item 
      \footnotesize{$\SafePred( \Win \cup \WinningMoves(\Win) , \: \mathit{Uncontrol}(\Win) ) \cap (\{\SymbState\} \times \SymbState) \subseteq \Win$.}
    \end{itemize}
  
    From there, we have the following:
    {\footnotesize
    \begin{align*}
     & \SafePred( \Win \cup \WinningMoves(\Win) , \: \Uncontrollable(\Win) )\\
   =\  &
    \underset{\SymbState \in \Explored}{\bigcup} \SafePred( \Win \cup
    \WinningMoves(\Win) , \: \mathit{Uncontrol}(\Win) ) \cap (\{\SymbState\} \times \SymbState) \\
    \subseteq\  & \Win 
    \end{align*}
    }
  Using the fixed point definition of winning states, we can conclude that $\Win$ is the
  set of winning states of $\StateSpace_{alg}$ over the symbolic states in $\Explored$ and thus $\underset{\SymbState \in \Explored}{\bigcup} Win[\SymbState]$ is the set of winning states in the initial game restricted to $\underset{\SymbState \in \Explored}{\bigcup} \SymbState$, which covers $Reach(\SymbState_0)$.
  
  Since all runs and histories of the game are defined from an initial state, they
  are contained in $Reach(\SymbState_0)$. Strategies of the game are defined on histories contained in $Reach(\SymbState_0)$, therefore they are strategies of the game restricted to $Reach(\SymbState_0)$ and an initial state is winning if and only if it is winning on $Reach(\SymbState_0)$.
  
  Then, the winning constraint on the parameters is the projection of the winning
  initial states on the parameters: 
  $$\WinningParam = \underset{\SymbState \in \Explored}{\bigcup} \ProjectParam{(Win[\SymbState] \cap \SymbState_0)}\enspace$$
  \end{proof}

\section{\imitator Model of the Production Cell}
An \imitator model of the 1-plate production cell can be seen in  \cref{fig:model_plate,fig:model_robot,fig:model_broadcaster}. The model is composed of three automata: the plate, the robot and the broadcaster. Arrows with same color signify a synchronized action in \imitator.
\label{app:prodcell}
\begin{figure}[h!]
  \centering
  \includegraphics[width=.61\textwidth]{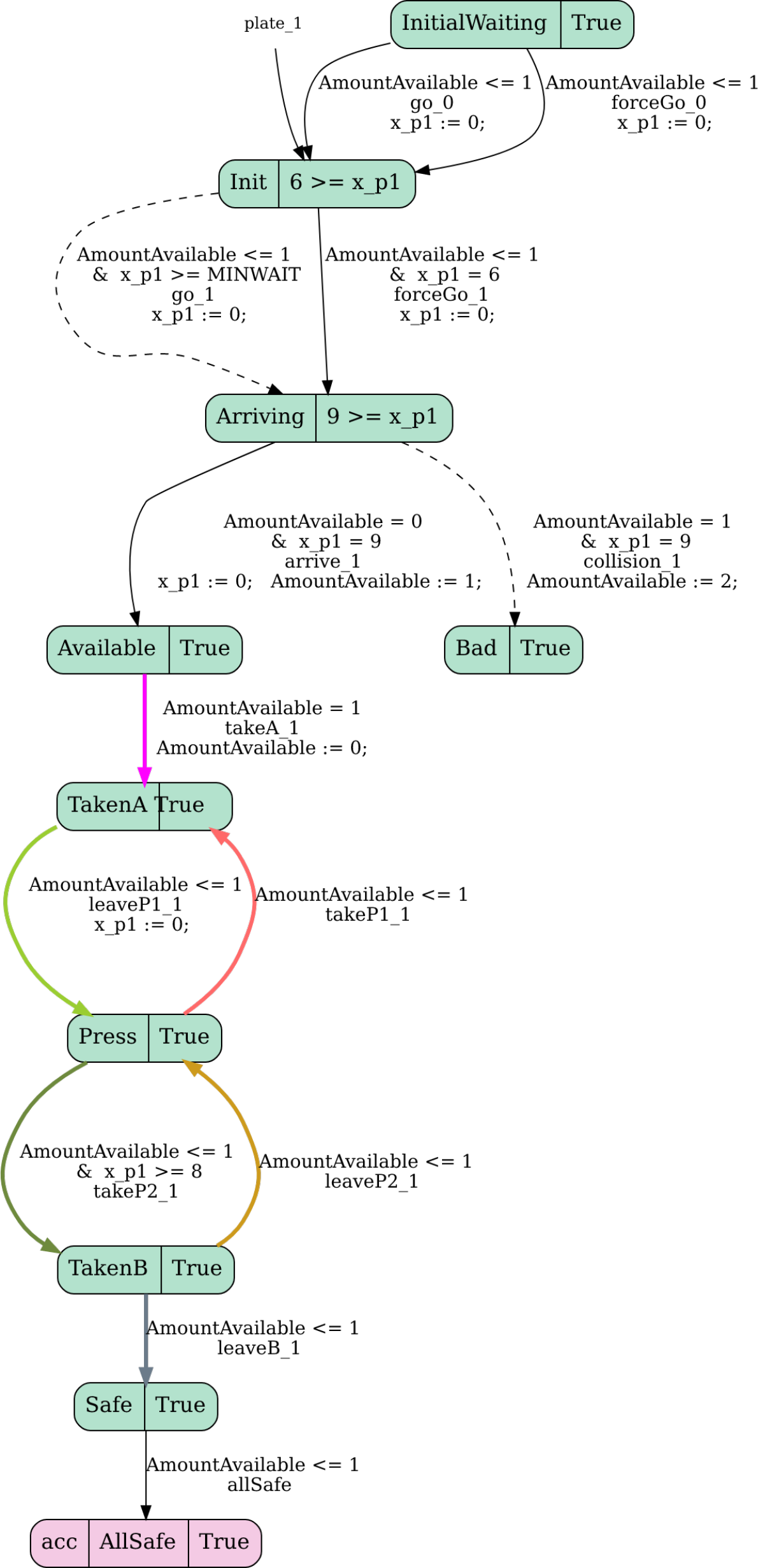}
  \caption{\imitator model for the Production Cell -- plate component}
  \label{fig:model_plate}
\end{figure}

\begin{figure}[t]
  \hspace{-2.5cm}
  \includegraphics[width=1.5\textwidth]{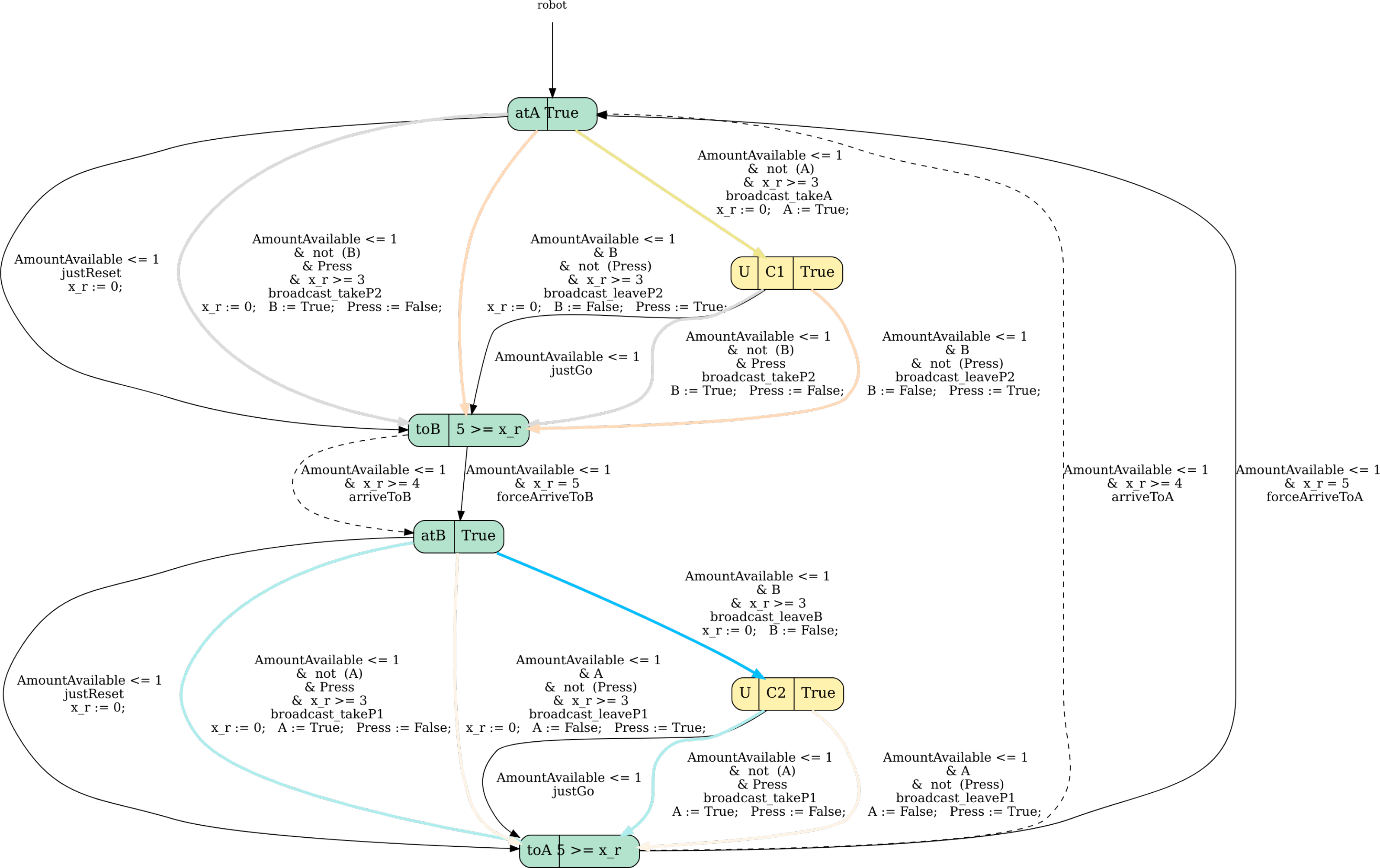}

  \bigskip
  \caption{\imitator model for the Production Cell -- robot component}
  \label{fig:model_robot}
\end{figure}

\bigskip

\begin{figure}[b]
  \hspace{-2.5cm}
  \includegraphics[width=1.5\textwidth]{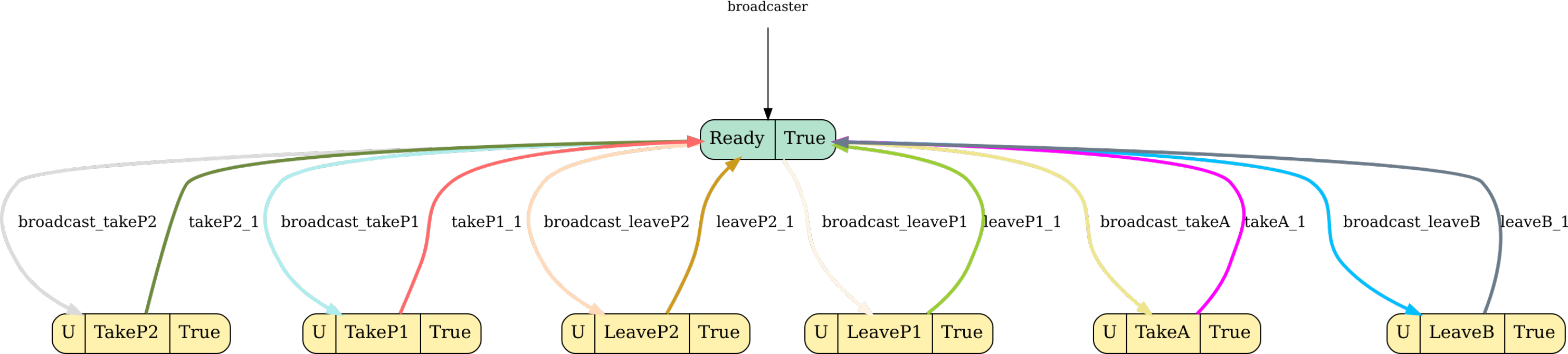}

  \bigskip
  \caption{\imitator model for the Production Cell -- broadcaster component\\ (auxiliary automaton to facilitate broadcast communication)}
  \label{fig:model_broadcaster}
\end{figure}

\end{document}